\documentclass[11pt]{article}
\usepackage{amssymb}
\usepackage{amsfonts}
\usepackage{amsthm}
\usepackage{graphicx}
\usepackage[utf8]{inputenc}
\usepackage{xcolor}
\usepackage[english]{babel}
\usepackage{amsmath}
\usepackage{eucal}
\usepackage{enumerate}
\usepackage{booktabs}
\usepackage{stmaryrd}
\usepackage{amsfonts,epsfig,epstopdf, titling,url, array}
\usepackage{verbatim} 
\usepackage{hyperref}
\usepackage{subcaption}
\usepackage{algorithm}
\usepackage{algpseudocode}

\newcommand\bfY{\mathbf{Y}}
\newcommand\bff{\mathbf{f}}
\newcommand\bfx{\mathbf{x}}
\newcommand\bfa{\mathbf{a}}
\newcommand\bfb{\mathbf{b}}
\newcommand\bfv{\mathbf{v}}
\newcommand\bfw{\mathbf{w}}
\newcommand\bfI{\mathbf{I}}

\numberwithin{equation}{section}

\newtheorem{thm}{Theorem}

\newcommand{\beq}{\begin{equation}}
\newcommand{\RN}[1]{%
  \textup{\uppercase\expandafter{\romannumeral#1}}%
}
\newcommand{\eeq}{\end{equation}}

\newcommand{\beqs}{\begin{equation*}}
\newcommand{\eeqs}{\end{equation*}}

\usepackage[
top    = 2.75cm,
bottom = 2.50cm,
left   = 2.49cm,
right  = 2.549cm]{geometry}

\begin{document}
\begin{titlepage}
\LARGE{
\begin{center}
The Curse of Black Sigatoka: A Backward Bifurcation Perspective
\end{center}
}
\large{
\begin{center} 
Bernard Asamoah Afful\footnote{Corresponding author, e-mail address: bernard.afful@usu.edu}, Luis F. Gordillo\footnote{e-mail address:  luis.gordillo@usu.edu} \\[0.3cm]
Department of Mathematics and Statistics, Utah State University, Logan, UT 84322\\
\end{center}
\begin{abstract}
\noindent 
Black Sigatoka disease (BSD), also known as black leaf streak disease, is an airborne fungal infection caused by \textit{Pseudocercospora fijiensis} that severely impacts global banana and plantain production. Its persistence and resistance to eradication make it one of the most challenging plant diseases to manage. In this paper, we propose a deterministic pathogen-host model to describe BSD dynamics. Due to dual transmission pathways (ascospores and conidia) and mate limitation in sexual reproduction, the model exhibits a backward bifurcation: a stable endemic equilibrium coexists with the disease-free equilibrium for certain parameter values in which the basic reproduction number, $\mathcal{R}_0$, is less than 1. This phenomenon explains why control strategies that solely reduce $\mathcal{R}_0$ below one may fail. For the backward bifurcation regime, we perform sensitivity analysis of the endemic equilibrium using normalized forward sensitivity indices, Latin Hypercube Sampling, and Partial Rank Correlation Coefficients. Results indicate that effective control must extend beyond $\mathcal{R}_0$ reduction and prioritize (1) limiting production of new susceptible leaves during high-risk periods and (2) developing and deploying disease-resistant plant varieties. To incorporate transmission variability, we also formulate a stochastic version of the model using the Stochastic Simulation Algorithm (SSA). Extensive numerical simulations compare stochastic realizations with deterministic predictions and quantify variability in disease dynamics. To identify the principal drivers of persistence and variability, we analyze the endemic equilibrium using Sobol's variance-based sensitivity method, which highlights the role of nonlinear parameter interactions in shaping variability.
\\[0.3cm]
\noindent
\textbf{Keywords: }Black Sigatoka disease; Black Leaf Streak disease; banana fungal airborne disease; basic reproduction number; stochastic simulation algorithm.
\end{abstract}
}
\end{titlepage}

\section{Introduction}\label{sec:intro}
Black Sigatoka disease (BSD), also known as black leaf streak disease (BLSD), is an airborne fungal disease caused by \textit{Pseudocercospora fijiensis} (formerly \textit{Mycosphaerella fijiensis}). It is endemic in most banana- and plantain-producing regions worldwide and ranks among the most severe threats to global banana production. In susceptible cultivars, the pathogen induces extensive leaf necrosis and premature defoliation before fruit bunches reach maturity, resulting in yield losses that frequently exceed 50\% in the absence of control measures.

Despite intensive management efforts, BSD remains exceptionally difficult to eradicate or suppress effectively. Several factors contribute to this resilience: the pathogen produces abundant ascospores that facilitate long-distance wind dispersal and genetic recombination; resistance to multiple fungicide classes has emerged following repeated applications; and the fungus is well adapted to warm, humid tropical environments that promote short disease cycles and frequent reinfection. These characteristics make containment or complete eradication nearly impossible in most production systems. Reliance on frequent fungicide sprays achieves only partial suppression, at considerable economic and environmental cost.

A rare exception occurred in North Queensland, Australia. Following the detection of BSD in April 2001 in the Tully region of Australia, the country's largest commercial banana-growing area, an intensive, multi-year eradication campaign was implemented. This program involved rigorous surveillance, destruction of infected host material where necessary, and strict quarantine measures. The effort successfully eliminated the pathogen from commercial plantations, representing perhaps the only documented instance of BSD eradication from a major production zone.

BSD spreads via two spore types, asexual conidia and sexual ascospores, which play complementary roles in the epidemic cycle. Conidia, produced in young to intermediate lesions, drive rapid local spread through rain splash, water runoff, or localized wind, sustaining polycyclic progression within plantations. Ascospores, maturing on necrotic tissue and forcibly discharged under wet conditions, enable broader geographic dispersal (often tens to hundreds of kilometers) and serve as primary inoculum for initiating epidemics in new areas or reinfecting plantations after seasonal lows or sanitation efforts. The combination of these dispersal mechanisms creates a highly effective transmission system.

In this paper, we present a quantitative compartmental pathogen-host model using differential equations to describe BSD dynamics. The model incorporates both spore types and accounts for mate limitation in ascospore production (due to the nature of \textit{P. fijiensis}, which requires compatible mating types). We derive the basic reproduction number, $\mathcal{R}_0$, and demonstrate that the model exhibits a backward bifurcation, by which a stable endemic equilibrium coexists with the disease-free equilibrium for $\mathcal{R}_0<1$. This bistability provides a theoretical explanation for the disease's extraordinary resilience and the failure of control strategies that aim solely to reduce $\mathcal{R}_0$ below 1.
We also formulate a stochastic model, implementing the Stochastic Simulation Algorithm (SSA) \cite{gillespie}, to investigate how intrinsic, individual randomness affects the overall variability of the disease dynamics.

The paper is structured as follows. In Sections 2 and 5, we describe the deterministic and stochastic compartmental models, respectively. Section 3 derives the disease-free and endemic equilibria and computes $\mathcal{R}_0$, and the analysis of backward bifurcation. In Section 4, we conduct sensitivity analysis of the endemic equilibrium using normalized forward sensitivity indices, Latin Hypercube Sampling, and Partial Rank Correlation Coefficients. Section 5 presents comprehensive numerical simulations that compare deterministic predictions with stochastic realizations, examine persistence/eradication thresholds under varying environmental and biological conditions, assess eradication probabilities at small population sizes, and use Sobol's variance-based method for quantifying the contributions of individual parameters and their interactions to the variability of the endemic equilibrium. Finally, the conclusions are presented in Section 6.

\section{A deterministic model}\label{sec:ode}
The life cycle of the causal agent of Black Sigatoka disease (BSD), \textit{Pseudocercospora fijiensis} (formerly \textit{Mycosphaerella fijiensis}), is illustrated in Figure \eqref{cycle}. The fungus completes its entire infection cycle on the banana plant, progressing through four distinct phases: (1) exposure, (2) incubation, (3) latency, and (4) spore production. As with many foliar fungal pathogens, spore germination and infection require high relative humidity or a wet leaf surface. Consequently, relative humidity and temperature strongly influence disease progression. In BSD, leaf susceptibility decreases with age, making young leaves the primary sites for spore deposition and germination. Following germination, lesions develop through exposure and latency phases before producing asexual conidia and, upon maturation of necrotic tissue, sexual ascospores. Spore germination typically occurs within 2 to 3 hours under conditions of high humidity and free water availability during the exposure phase.
\begin{figure}[!ht]
    \centering
     \includegraphics[scale=0.3]{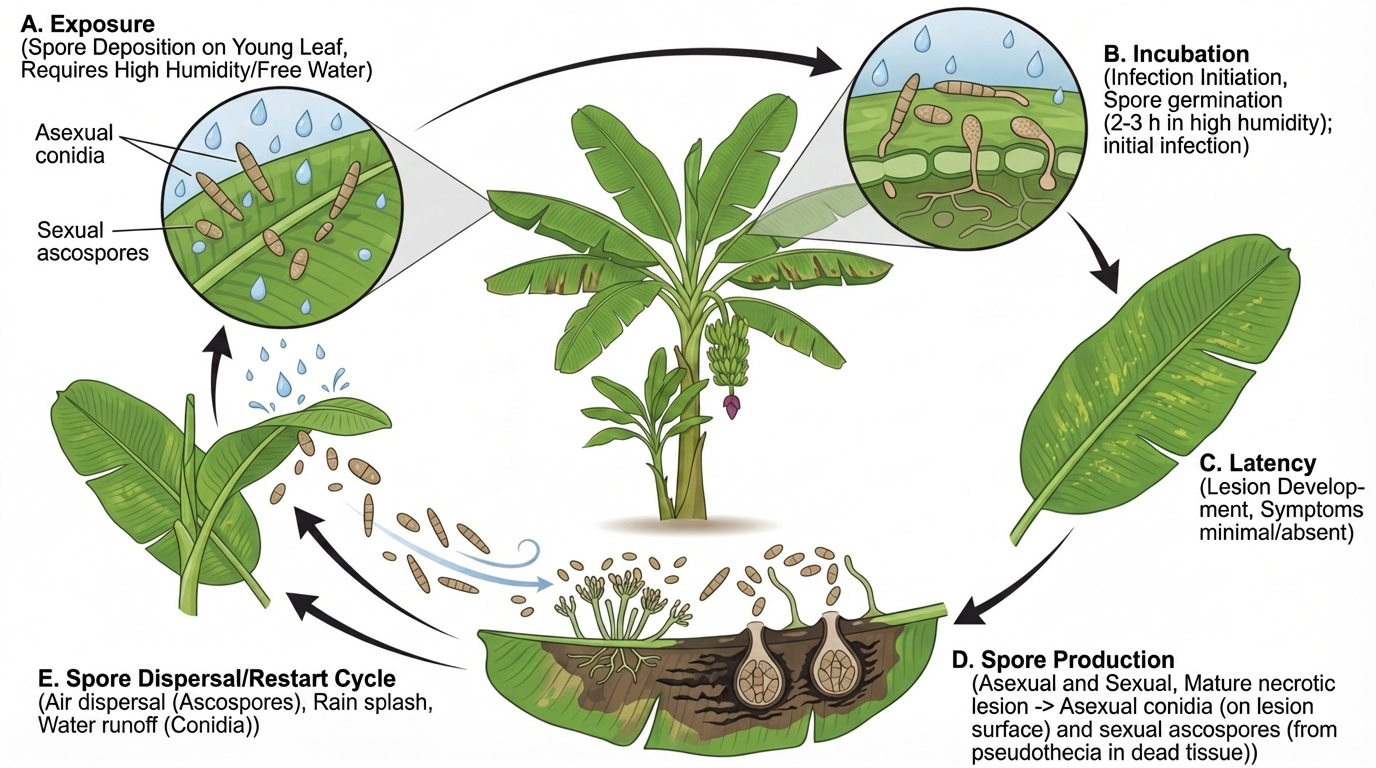}
    \caption{Life cycle of Black Sigatoka disease (BSD), caused by \textit{Pseudocercospora fijiensis} on a banana plant. The diagram illustrates the progression from spore deposition on young leaves through infection, incubation, latency, asexual spore (conidia) production in developing lesions, and eventual sexual maturation on necrotic tissue, leading to ascospore production in pseudothecia. Both conidia and ascospores serve as inoculum for new infections. Arrows indicate key transitions and spore dispersal pathways. (Adapted from \cite{ravigne} using \url{figurelabs.ai}).}
    \label{cycle}
\end{figure}

Let $N(t)$ denote the total leaf biomass density at time $t$. This biomass is partitioned into four compartments: susceptible $H(t)$, resistant $R(t)$, exposed $E(t)$, and infected $I(t)$, that is, $N(t)= H(t) + R(t) + E(t) + I(t)$. Healthy leaf biomass consists of susceptible and resistant biomass, $H(t) + R(t)$. Susceptible leaves permit rapid spore germination and infection under favorable climatic conditions, whereas resistant leaves inhibit initial spore development, even in suitable environments; spores on resistant leaves may fail to germinate or do so slowly.
New leaf biomass enters the system at a rate $\Lambda$. The fraction of susceptible leaves is denoted by $\kappa$ (with $1 - \kappa$ resistant). The natural decay rate of leaves is assumed constant and denoted by $\mu$.

\subsection{The transmission rate}
BSD is transmitted via two complementary spore types: asexual conidia and sexual ascospores. The pathogen exhibits a polycyclic life cycle, with multiple infection generations possible within a single growing season. Both spore types initiate infections on susceptible leaves, producing new lesions and perpetuating transmission.

Conidia, produced in young to intermediate lesions, are dispersed primarily over short distances by rain splash, water runoff, or localized wind. They drive rapid local epidemics and sustain polycyclic progression within plantations. Conidia germinate under high humidity and free water on leaf surfaces, penetrating stomata to form new lesions. Although fewer conidia are produced per lesion than ascospores, they are essential for building local epidemic intensity.

Ascospores mature on necrotic leaf tissue and are forcibly discharged under wet conditions. Smaller and more abundant per lesion, ascospores are primarily wind-dispersed, enabling long-distance movement (tens to hundreds of kilometers, though limited by UV sensitivity). They serve as the primary inoculum for initiating epidemics in new areas or reinfecting plantations after sanitation or seasonal declines, while also contributing to within-plantation spread and genetic recombination (e.g., facilitating fungicide resistance adaptation).

The densities of conidia and ascospores at time $t$ are denoted by $A(t)$ and $S(t)$, respectively. The transmission rate $\beta$ depends on relative humidity $h$ and temperature $T$, 
\begin{equation}
    \beta(h,T)=\beta_0\overbrace{\left(\frac{h}{h+K_h}\right)}^{\text{Humidity effects}}\underbrace{\left(\exp{\left(-\frac{(T-\hat{T})^2}{2\sigma_T^2}\right)}\right)}_{\text{Temperature effects}}, \quad \beta_0,K_h,\hat{T},\sigma^2_T>0.
\end{equation}
The rate of new infections is modeled as 
\begin{equation}\label{FOI}
\Gamma=\beta(h, T)\theta SH + \beta(h,T)\psi A H/N,    
\end{equation}
where $\theta, \psi > 0$ and $N=N(t)$ is the total leaf biomass. The first term employs mass-action incidence to represent long-distance, density-independent transmission by wind-dispersed ascospores, while the second term employs standard incidence to capture localized, frequency-dependent spread by rain-splash-dispersed conidia. 

The environmental dependence incorporates saturation at high humidity and a Gaussian peak around the optimal temperature $\hat{T}$, see Figure \ref{guass}, providing a biologically realistic representation of transmission under varying climatic conditions \cite{bari,ramirez}.

\subsection{Mate limitation in ascospores}
\textit{Pseudocercospora fijiensis} is heterothallic, requiring two compatible mating types (MAT1-1 and MAT1-2, typically in balanced ratios in established populations) for successful sexual reproduction. Ascospores form in pseudothecia on necrotic tissue following fertilization of receptive hyphae by spermatia of the opposite mating type. This process requires physical proximity between compatible mycelia within or between lesions.

At high densities of infected leaf biomass, mating opportunities are abundant, allowing ascospore production to approach its maximum rate. At low densities, such as during early invasion, post-sanitation recovery, or in small or fragmented plantations, the encounter rate between compatible mating types declines sharply, introducing an Allee effect due to mate limitation. The per-unit-biomass rate of successful mating, $g(I)$, therefore decreases nonlinearly with declining density.

The rate of new ascospore production is thus $\alpha g(I(t)) I(t)$, where $\alpha > 0$ and
\begin{equation}\label{eq: function g}
g(I(t))= \frac{\lambda I(t)}{1+\lambda I(t)},\quad\lambda>0.
\end{equation}
When $  I(t) \gg 1/\lambda  $ (high density), $  g(I(t)) \approx 1  $, yielding near-maximal production. When $  I(t) \ll 1/\lambda  $ (low density), $  g(I(t)) \approx \lambda I(t)  $, so ascospore production scales quadratically as $  \alpha \lambda I(t)^2  $, reflecting the pairwise requirement for compatible mates.

Conidia are produced at a constant rate $\eta$ per unit infected biomass. Infections initially produce latent lesions unable to generate spores until maturity (rate $\gamma$). Cultural practices, e.g., leaf removal and sanitation, reduce infected leaf biomass and associated spores at a constant rate $\rho$.

\begin{figure}
    \centering
    \includegraphics[scale = 0.25]{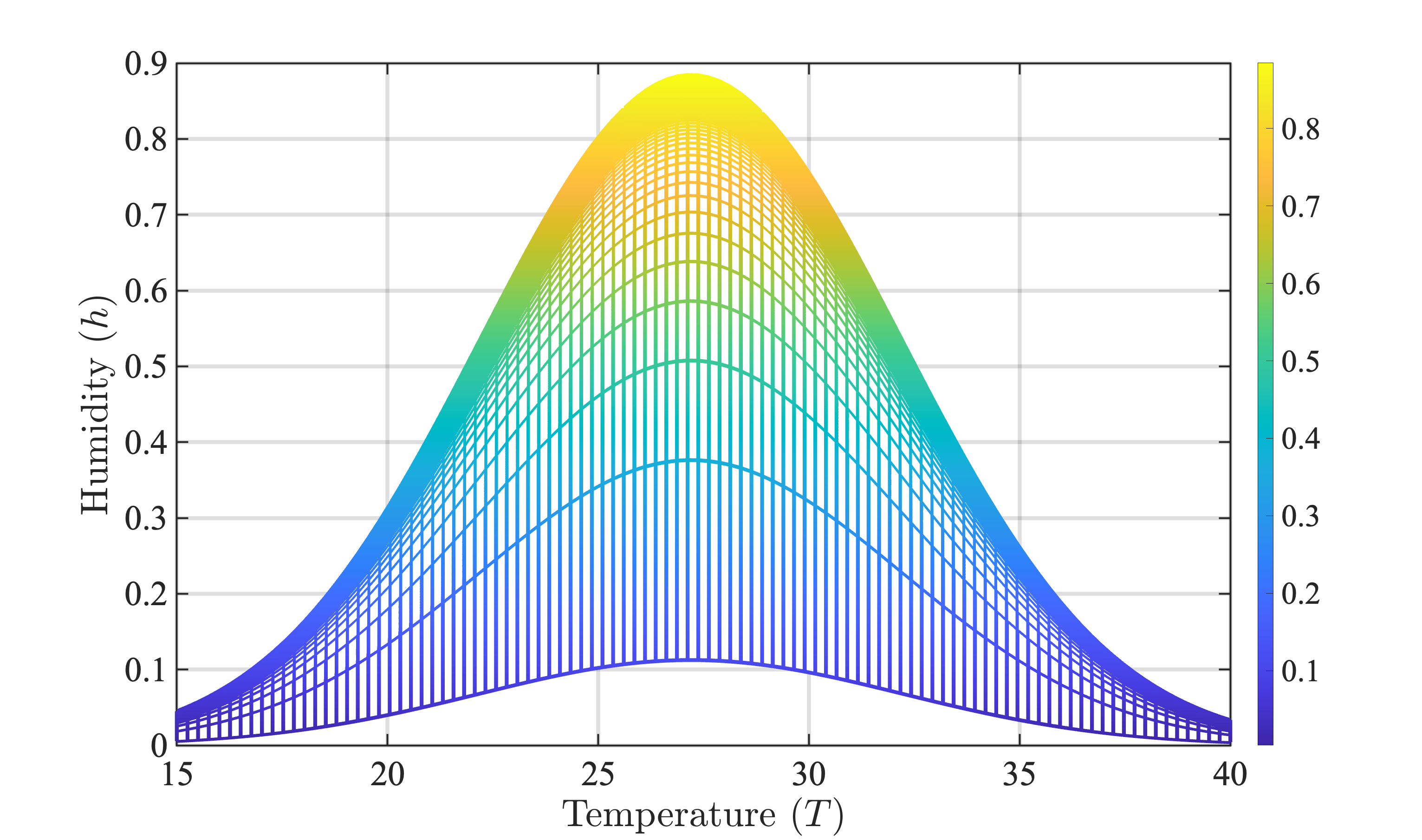}
    \caption{Effect of temperature and relative humidity on the transmission rate $\beta(h,T)$. The plot shows how the transmission rate varies with temperature for various humidity levels. The Gaussian profile with respect to temperature reflects an optimal temperature ($\hat{T} = 27.2^\circ C$) for pathogen activity, beyond which $\beta(h, T)$ declines. The logistic increase with humidity saturates as humidity approaches 1. This captures the biological necessity of warm and moist environments for fungal spore germination and successful infection.}
    \label{guass}
\end{figure}

\subsection{The model}
Figure \ref{flowchart} presents a schematic of the compartmental dynamics. Combining the above assumptions yields the following system of differential equations
\begin{align}\label{eqn1}
    \text{\textbf{Host population}}&
    \begin{cases}
        \dfrac{dH}{dt}=\kappa\Lambda - \beta(h,T)(\theta S+\psi\frac{A}{N})H - \mu H,\\[10pt]
	\dfrac{dR}{dt}=(1-\kappa)\Lambda - (1-\delta)\beta(h,T)(\theta S+\psi\frac{A}{N})R - \mu R,\\[10pt]
	\dfrac{dE}{dt}=\beta(h,T)(\theta S+\psi \frac{A}{N})(H+(1-\delta)R) - (\gamma+\mu)E,\\[10pt]
	\dfrac{dI}{dt}=\gamma E - (\mu+\rho)I.
    \end{cases}\\[10pt]
    \text{\textbf{Pathogen population}}&
    \begin{cases}
        \dfrac{dA}{dt}=\eta I - (\mu_P+\rho)A,\\[10pt]
	\dfrac{dS}{dt}=\alpha g(I)I - (\mu_P+\rho)S.
    \end{cases}
\end{align}
The model accounts for resistance efficacy to the disease, $\delta$, $0\leq \delta\leq 1$, which quantifies the ability of resistant leaves to suppress spore penetration and germination.
These equations capture the essential biological and environmental drivers of BSD transmission while incorporating resistance, dual spore pathways, and density-dependent sexual reproduction. Parameter definitions, dimensions, and values are provided in Tables 1 and 2.

\begin{figure}
    \centering
    \includegraphics[scale =0.45]{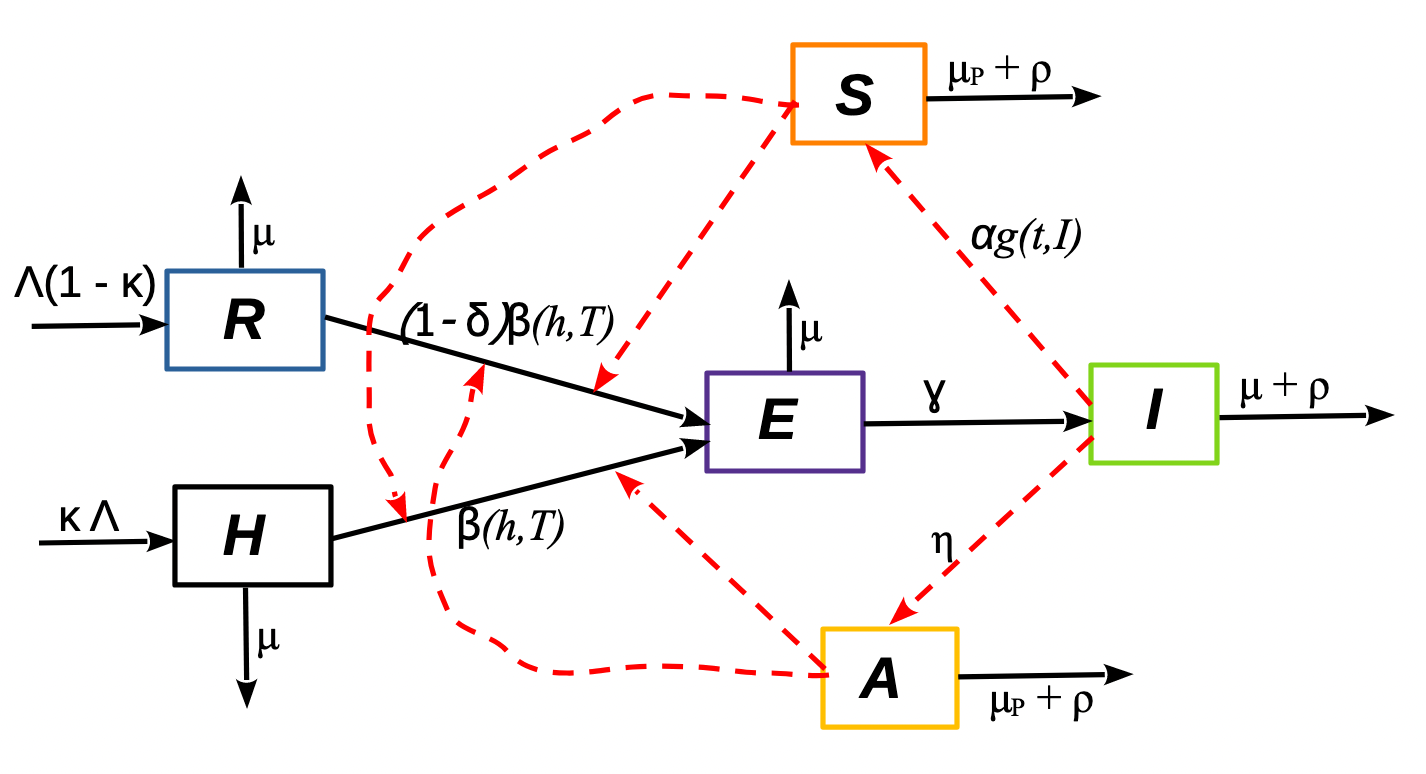}
    \caption{Schematic diagram showing the dynamics of the proposed Black Sigatoka Disease model. Solid arrows represent the progression of biomass of leaves through the disease; dashed arrows indicate the interaction between leaves and spores.}
    \label{flowchart}
\end{figure}

\begin{table}[!ht]
    \caption{The BSD model's parameters and their respective descriptions and dimensions.}
    \label{table1}
    \begin{tabular}{lllll@{}}
        \toprule
        &Description & Dimension\\
        \midrule
        Variables & \\
        $H$ & Biomass of susceptible leaves & Leaves per ha\\
        $R$ & Biomass of resistant leaves   & Leaves per ha\\
        $E$ & Biomass of exposed leaves     & Leaves per ha\\
        $I$ & Biomass of infected leaves    & Leaves per ha\\
        $A$ & Biomass of asexual spores     & Spores per unit of leaf biomass\\
        $S$ & Biomass of sexual spores      & Spores per unit of leaf biomass\\
        Parameters & \\
        $\Lambda$  & Recruitment or replanting rate of healthy leaves & Leaves per day\\
        $\kappa$  & Fraction of susceptible recruited plants & Dimensionless \\
        $\delta$   & Efficacy rate of leaf resistance & Dimensionless \\
        $\beta_0$ & Base transmission rate & day$^{-1}$\\
        $\beta(h,T)$  & Effective disease transmission rate & day$^{-1}$\\
        $\gamma$   & Latency period & day$^{-1}$\\
        $\theta$    & Sexual spore dispersal rate & (spores$\times$day)$^{-1}$\\
        $\psi$ & Asexual spore dispersal rate & (spores$\times$day)$^{-1}$\\
        $\eta$   & Asexual spore production rate & Spores per leaf per day\\
        $\alpha$ & Sexual spore production rate & Spores per leaf per day\\
        $\mu$      & Natural decay of leaves & day$^{-1}$\\
        $\mu_P$    & Natural mortality of pathogen & day$^{-1}$\\
        $\rho$     & Mortality rate of infected leaves, sexual & day$^{-1}$\\
        & and asexual spore due to cultural practices&\\
        $\lambda$  & Maximum achievable mating rate & Mating per lesion\\
        $\sigma_T$ & Temperature tolerance & degrees Celsius ($^{\circ}$C)\\
        $h$        & Humidity level & Dimensionless\\
        $K_h$      & Half-saturation constant for humidity & Dimensionless\\
        $T$  & Temperature & degrees Celsius ($^{\circ}$C)\\
        $\hat{T}$  & Optimal temperature & degrees Celsius ($^{\circ}$C)\\
        \bottomrule
    \end{tabular}
\end{table}

\begin{table}[ht]
	\centering
	\caption{Parameter values of the BSD model \eqref{eqn1} with description in Table \ref{table1}.}
	\begin{tabular}{ccc||ccc}
		\hline
		Parameters&Values&References & Parameters&Values&References\\
		\hline
		  \midrule
            $\Lambda$ & 20 & \cite{agouanet} & $\mu$ & 0.01 & \cite{ravigne} \\
            $\kappa$ & 0.5 & \cite{agouanet} & $\mu_P$ & 0.1 & Assumed\\
            $\delta$ & 0.9 & \cite{agouanet} & $\rho$ & 0.5 & Assumed \\
            $\beta_0$ & 0.9 & Assumed & $\lambda$ & 4 & \cite{agouanet}\\
            $\gamma$ & 0.01 & \cite{ravigne} & $\sigma_T$ & 5    & Assumed\\
            $\eta$ & 20   & \cite{ravigne} & $h$        & 0.9  & Assumed\\
            $\alpha$ & 50  & Assumed & $\hat{T}$  & 27.2 & \cite{bebber}\\
            $\theta$ & 48   & \cite{agouanet} & $T$ & 30.3 & \cite{bebber}\\
            $\psi$ & 20   & \cite{agouanet} & $K_h$      & 0.7  & Assumed \\
		\hline
	\end{tabular}
	\label{table2}
\end{table}
Parameters marked ``Assumed" in Table \ref{table2} were selected to remain consistent with biologically plausible ranges reported in related fungal plant-pathogen modeling studies, particularly those addressing transmission, environmental forcing, and control processes in foliar diseases; the robustness of the model outcomes to these assumptions is subsequently assessed through local, global, and variance-based sensitivity analysis.

\clearpage
\section{Basic Reproduction Number}\label{sec:R0}
\subsection{BSD-free and BSD-persistent equilibria}\label{subsec:DFE}
The disease-free equilibrium corresponds to the complete absence of infection. At this state, the biomass of exposed, infected, and spore compartments must vanish, so $E = I = A = S = 0$. Setting the time derivatives of the host compartments to zero in the model equations (2.4) yields the steady-state values for susceptible and resistant leaf biomass:
$$H^0 = \frac{\kappa \Lambda}{\mu}, \quad R^0 = \frac{(1 - \kappa) \Lambda}{\mu}.$$
Thus, the disease-free equilibrium is
\begin{equation}\label{eqn2}
\mathcal{E}^0 = \left( H^0, R^0, E^0, I^0, A^0, S^0 \right) = \left( \frac{\kappa \Lambda}{\mu}, \frac{(1 - \kappa) \Lambda}{\mu}, 0, 0, 0, 0 \right).
\end{equation}
The total leaf biomass density at the disease-free state is therefore
$$N^0 = H^0 + R^0 = \frac{\Lambda}{\mu}.$$

The disease-persistent (endemic) equilibrium, denoted $\mathcal{E}^* = (H^*, R^*, E^*, I^*, A^*, S^*)$, represents a long-term steady state in which the infection remains persistent within the population and infected and susceptible leaf biomass coexist indefinitely. Identifying the conditions under which this equilibrium exists and is stable enables the prediction of outbreak potential and the evaluation of control interventions in banana plantations. 

The BSD-persistent equilibrium is determined by setting the time derivatives in the BSD model (2.4) and (2.5) equal to zero. This balances the rate of new infections against the removal of infected biomass and the decay of pathogens. Solving the resulting algebraic system gives
\begin{equation}\label{eqn2a}
    \mathcal{E}^*=\left\{\frac{\kappa\Lambda}{\beta(h,T)\Theta_1^*+\mu},\frac{(1-\kappa)\Lambda}{(1-\delta)\beta(h,T)\Theta_1^*+\mu},\frac{\beta(h,T)\Theta_1^*\Theta_2^*}{\Delta_1},\frac{\gamma E^*}{\Delta_2},\frac{\eta\gamma E^*}{\Delta_2\Delta_3},\frac{\alpha\gamma^2\lambda (E^*)^2}{\Delta_2\Delta_3(\Delta_2+\gamma\lambda E^*)}\right\},
\end{equation}
where 
\[\Theta_1^*=\theta S^*+\psi\frac{A^*}{N^*}, \,\Theta_2^*=H^*+(1-\delta)R^*,\, \Delta_1=\gamma+\mu,\,\Delta_2=\mu+\rho, \, \Delta_3=\mu_P+\rho.\]
\subsection{\texorpdfstring{$\mathcal{R}_0$}{} for BSD}
The basic reproduction number, denoted $\mathcal{R}_0$, is a threshold parameter in epidemiology that quantifies the expected number of secondary infections produced by a single infected individual introduced into a fully susceptible population. It determines whether an infection can invade and establish itself ($\mathcal{R}_0 > 1$) or is expected to die out ($\mathcal{R}_0 < 1$). Computing $\mathcal{R}_0$ is essential for assessing invasion risk, designing control strategies, and identifying interventions for reducing disease transmission.

A standard and rigorous approach for deriving $\mathcal{R}_0$ is the next-generation matrix method \cite{afful, diekmann, van}. This technique constructs two matrices: the transmission Jacobian matrix $\mathcal{F}$, which captures the rate of new infections in each infected compartment, and the transition Jacobian matrix $\mathcal{V}$, which describes the rates of progression or removal among infected compartments. The basic reproduction number is then the spectral radius (dominant eigenvalue) of the next-generation matrix $\mathcal{FV}^{-1}$.

In our model, the infected compartments are the exposed $E$, the infected $I$, and the asexual spores (conidia), $A$. Sexual spores (ascospores, $S$) are excluded from the computation of $\mathcal{R}_0$. Near the disease-free equilibrium, the infected leaf biomass $I \approx 0$, so the mating function satisfies $g(I) I \sim I^2$. Consequently, sexual spore production is a higher-order term that vanishes quadratically and does not contribute to the leading-order invasion dynamics in the linearized system.
We therefore consider the vector of infected states $\mathbf{X} = (E, I, A)^\top$ and express the dynamics of these compartments as
$$\frac{d\mathbf{X}}{dt} = F(\mathbf{X}) - V(\mathbf{X}),$$
where the new infection matrix is
$$F(\mathbf{X}) = \begin{bmatrix}
\beta(h,T) \psi \frac{A}{N} \left[ H + (1 - \delta) R \right] \\
0 \\
0
\end{bmatrix},$$
and the transition matrix is
$$V(\mathbf{X}) = \begin{bmatrix}
(\gamma + \mu) E \\
- \gamma E + (\mu + \rho) I \\
- \eta I + (\mu_P + \rho) A
\end{bmatrix}.$$
To compute $\mathcal{R}_0$, evaluate the Jacobians of $F$ and $V$, $\mathcal{F}$ and $\mathcal{V}$ respectively, at the disease-free equilibrium $\mathcal{E}^0 = (H^0, R^0, 0, 0, 0, 0)$, where $H^0 = \kappa \Lambda / \mu $ and $R^0 = (1 - \kappa) \Lambda / \mu$, 
\begin{align}\label{eqn4}\nonumber
    \mathcal{F}(\mathcal{E}^0)
    &= \begin{bmatrix}
        0&0 & \beta\psi\left[\kappa\delta+(1-\delta)\right]\\
        0 & 0 & 0\\
        0 & 0 & 0
    \end{bmatrix},
    \qquad\mathcal{V}(\mathcal{E}^0)
    = \begin{bmatrix}
        \gamma+\mu & 0 & 0 \\
        -\gamma  & \mu+\rho & 0 \\
        0 & -\eta  & \mu_P+\rho
    \end{bmatrix}.
\end{align}
The basic reproduction number is then
$$\mathcal{R}_0 = \sigma\bigl(\mathcal{F}\mathcal{V}^{-1} \bigr),$$
where $  \sigma(\cdot)  $ denotes the spectral radius. In our case,
\[\mathcal{V}^{-1}(\mathcal{E}^0)
    = \begin{bmatrix}
        \dfrac{1}{\gamma+\mu} & 0 & 0 \\
        \dfrac{\gamma}{(\gamma+\mu)(\mu+\rho)}&\dfrac{1}{\mu+\rho}&0 \\
        \dfrac{\eta\gamma}{(\gamma+\mu)(\mu+\rho)(\mu_P+\rho)} & \dfrac{\eta}{(\mu+\rho)(\mu_P+\rho)} & \dfrac{1}{(\mu_P+\rho)}
    \end{bmatrix}.\]
It is straightforward to see that the matrix product $\mathcal{F V}^{-1}$ has a single nonzero eigenvalue,
\begin{equation}\label{eqn5}
    \mathcal{R}_0=\frac{\beta(h,T)\psi\eta\gamma[\delta\kappa+(1-\delta)]}{(\gamma+\mu)(\mu_P+\rho)(\mu+\rho)}.
\end{equation}
This expression for $  \mathcal{R}_0  $ reflects the expected number of new infections (primarily via conidia transmission in the early invasion phase) produced by one initially infected unit in a fully susceptible leaf population, accounting for latency, asexual spore production, and decay processes. In regimes where mate limitation is negligible at invasion, i.e., sexual reproduction contributes negligibly near $I = 0$, $\mathcal{R}_0$ is dominated by the asexual transmission pathway. However, the full nonlinear model, including sexual spore production and the associated Allee effect, gives rise to a backward bifurcation, as demonstrated subsequently.

\subsection{Disease prevalence for \texorpdfstring{$\mathcal{R}_0<1$}{R0<1}}
As mentioned in the previous subsection, the infections produced by ascospores are not captured in the computation of $\mathcal{R}_0$ because, for small infected leaf biomass, the corresponding term of the force of infection is nonlinear, $\alpha g(I)I\sim \alpha\lambda I^2$. Here, we show that the disease can successfully invade a healthy population despite having $\mathcal{R}_0<1$.

Let us denote by $\Phi$ the effective host biomass available for infection at the BSD-free equilibrium,
\begin{equation}
    \Phi=\frac{H^0+(1-\delta)R^0}{N^0}=\kappa+(1-\delta)(1-\kappa)=\delta \kappa+(1-\delta).
\end{equation}
During an early invasion phase, disease progression is slow, allowing us to assume that the exposed leaves, conidia, and ascospores are at a quasi-steady state, that is, $dE/dt=dA/dt=dS/dt=0$. Therefore,
\begin{equation}
    E=\frac{\Gamma\times\left(H^0+(1-\delta)R^0\right)}{\gamma+\mu},\quad A=\frac{\eta I}{\mu_P+\rho},\quad S=\frac{\alpha g(I)I}{\mu_P+\rho}.
\end{equation}
Substituting the last two approximations into the force of infection \eqref{FOI} yields
\begin{equation}
    \Gamma=\beta(h,T)\left(\theta\alpha g(I)+\frac{\psi\eta}{N^0}\right)\frac{I}{\mu_P+\rho}.
\end{equation}
Substituting the approximation to $E$ in the equation for $dI/dt$ and recalling that $N^0=\Lambda/\mu$ gives
\begin{align*}
    \frac{1}{I}\frac{dI}{dt} &= \gamma \frac{E}{I} -(\mu+\rho)\\
    &= \gamma \beta(h,T)\left(\theta\alpha g(I)+\frac{\psi\eta}{N^0}\right)\frac{1}{\mu_P+\rho}\frac{\left(H^0+(1-\delta)R^0\right)}{\gamma+\mu}-(\mu+\rho)\\[10pt]
    &=(\gamma \beta(h,T) \theta \alpha g(I) N^0+\gamma \beta(h,T)\psi\eta)\frac{\Phi}{(\mu_P+\rho)(\gamma+\mu)}-(\mu+\rho) \\[10pt]
    &=\underbrace{\frac{\Delta\theta\alpha g(I)}{\mu_P+\rho}}_{NL}+\underbrace{(\mathcal{R}_0-1)(\mu+\rho)}_{L},
\end{align*}
where  
\begin{equation*}
    \Delta=\frac{\beta(h,T)\Lambda\gamma\Phi}{\mu(\gamma+\mu)}
\end{equation*}
and the nonlinear and linear terms of the equation are denoted by $NL$ and $L$, respectively.
If the nonlinear term contribution is neglected, because $I$ is very small, then all the growth of infected biomass comes from the linear contribution when $L>0$, or equivalently, when $\mathcal{R}_0>1$, as expected.  

Let us consider now the effects of the nonlinear term $NL$ and answer the question if it is possible to have a disease invasion even if $\mathcal{R}_0<1$.
That would be the case if the inequality
\begin{equation}
    g(I)>\underbrace{\frac{(1-\mathcal{R}_0)(\mu+\rho)(\mu_P+\rho)}{\Delta\theta\alpha }}_{D}
\end{equation}
is satisfied. By recalling the definition of $g(I)$ given in \eqref{eq: function g}, we obtain the following equivalent condition for a successful disease invasion
\begin{equation}
    I \,\lambda \left(\frac{1}{D}-1\right)>1,
\end{equation}
from which it is necessary that $D$ be less than one. Let $I_c$ be the invasion threshold, i.e., the value of $I$ at which equality is held in the previous relationship. Then
\begin{equation}
    I_c=\frac{D}{\lambda(1-D)}.
\end{equation}
In conclusion, when $\mathcal{R}_0<1$ but $I>I_c$ and $D<1$, there will be disease invasion. 
The biological interpretation of these conditions is that the nonlinear effects of sexual production of ascospores may trigger an outbreak to advance when the threshold of infected biomass $I_c$ is crossed, and ascospore production is sufficiently rapid. This suggests bistability in the model via a backward bifurcation, which is a subcritical transcritical bifurcation.

\subsection{The Backward Bifurcation}
This resilience of the BSD can be better understood by considering a backward bifurcation, in which the disease can maintain stable endemic states even when the basic reproduction number $\mathcal{R}_0$ falls below 1, defying conventional thresholds for eradication. 
In our model, this arises from dual spore transmission pathways and mate limitation in ascospore production, leading to bistability and hysteresis that complicate control strategies. The rigorous analysis draws on Theorem 4.1 from \cite{castillo}, originally applied to tuberculosis dynamics, to confirm subcritical behavior and underscore the need for innovative approaches, such as resistant varieties.

\subsubsection{Numerical results}
Here, we present numerical computations of the equilibria for the model given in Equation \eqref{eqn1}, using the parameter values in Table 2. The results, illustrated in Figure \ref{fig: backwardbif}, clearly show the occurrence of a backward bifurcation. 
In the context of our BSD model, the basic reproductive number, $\mathcal{R}_0$, is an inadequate predictor of disease eradication. While a traditional approach would suggest that reducing $\mathcal{R}_0$ below one is enough to attain eradication, the nature of the \textit{Mycosphaerella fijiensis}, its polycyclic life cycle, the dual spore pathways, capacity of reinfection, and long-distance dispersal may allow the pathogen to persist on a stable branch even when $\mathcal{R}_0<1$. 
The bistable behavior arises predominantly from the nonlinear dynamics in the sexual reproduction cycle, which creates an unstable equilibrium branch in the bifurcation diagram (red dashes).

The bifurcation diagram in Figure \ref{fig: backwardbif} shows two remarkable characteristics. First, the critical value of $\mathcal{R}_0$, for which the fold in the bifurcation curve occurs, is exceptionally low. This indicates that outbreaks of Black Sigatoka disease can emerge under a wide range of epidemiological conditions, even when the basic reproduction number is only slightly above zero. Second, the unstable equilibrium branch remains extremely close to zero for very small positive values of $\mathcal{R}_0$. Consequently, the introduction of even a tiny initial density of infected leaf biomass may be sufficient to trigger a full epidemic outbreak, as the system would lie within the basin of attraction of the stable endemic equilibrium.

These observations are consistent with the fact that BSD has not been successfully eradicated in practice, but on a single occasion, in Australia, between the years 2001 and 2005, after an outbreak in the Tully region of Queensland \cite{peterson, cook}. The program required an aggressive approach that included the complete destruction of infected plantations, the removal and burial of plant materials, the burning of infected leaves, strict quarantine zones, surveillance, and movement controls coordinated by Biosecurity Queensland \cite{henderson, sosnowski}. The Tully Banana Black Sigatoka Eradication program involved the destruction of approximately $95\%$ of banana plantations in the infected zone and a six-month ban on replanting to ensure the pathogen had no living host to maintain its population \cite{henderson}. 
This is consistent with our model results, which suggest that successful management depends on intensive biomass reduction to shift the system into the narrow basin of the stable state, representing a BSD-free equilibrium.

\begin{figure}
    \centering
    \includegraphics[width=0.8\linewidth]{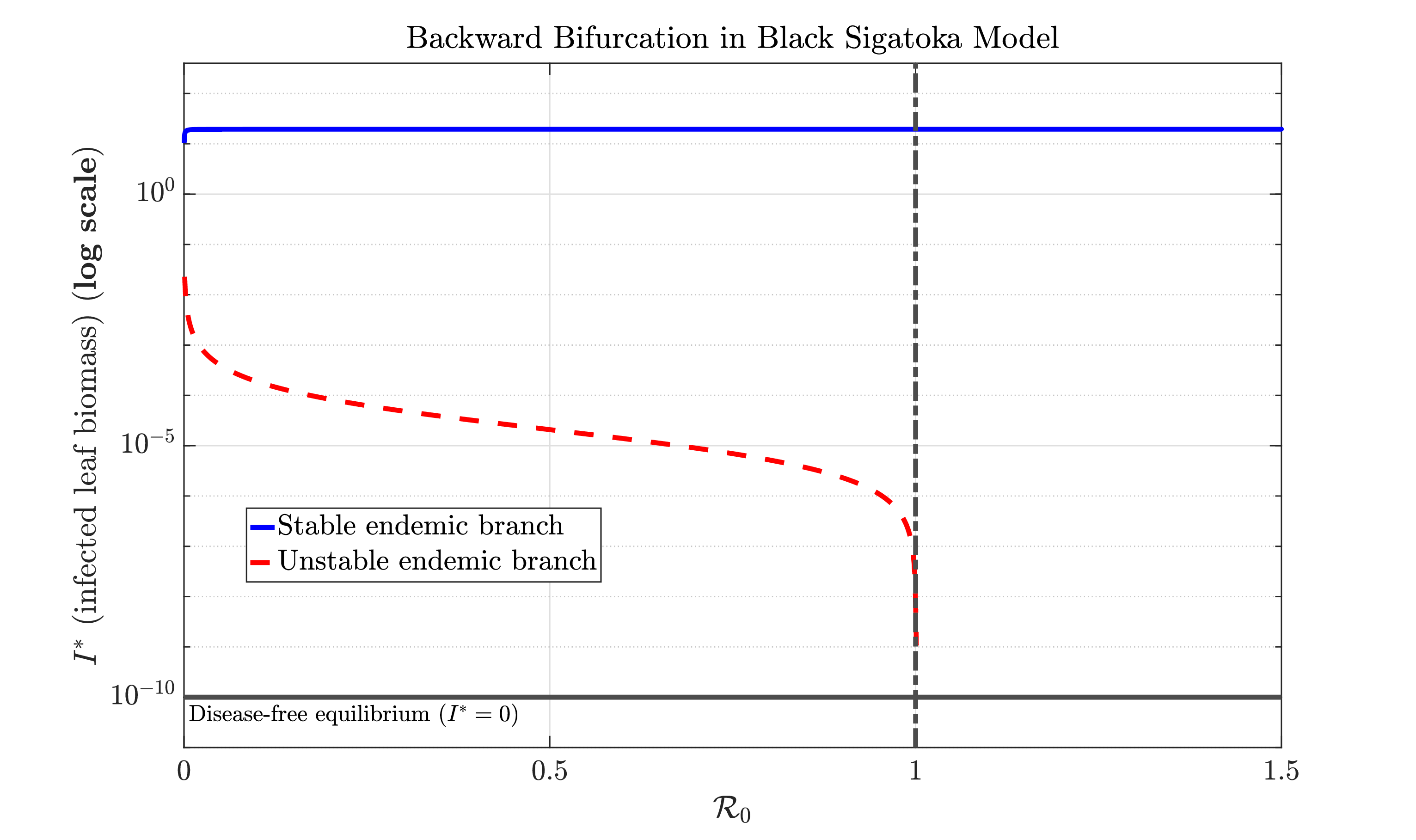}
    \caption{Backward bifurcation in the model \eqref{eqn1}. The stable endemic branch (blue) extends beyond $\mathcal{R}_0=1$. The unstable branch (red dashes) emerges at $\mathcal{R}_0=1$ and bends ``backward". The coexistence of the positive and disease-free equilibria for $\mathcal{R}_0<1$ persists for extremely small values of $\mathcal{R}_0$. The fold on the bifurcation curve for small values of $\mathcal{R}_0$ is not displayed due to computational limitations.}
    \label{fig: backwardbif}
\end{figure}

\subsubsection{Analytical results}
Here, we analytically investigate the presence of a backward bifurcation in the BSD model \eqref{eqn1}. We follow \cite{castillo}, Theorem 4.1, which provides conditions, derived from eigenvectors and second-order partial derivatives, under which a dynamical system undergoes a backward bifurcation at the disease-free equilibrium, when the bifurcation parameter $\beta_0$ passes through $\beta_0^*$, where $\beta_0^*$ is the value that corresponds to $\mathcal{R}_0=1$. 
\begin{thm}
    Let us denote $\bfx=(H,R,E,I,A,S)$ and $\bff=d\bfx/dt$. The bifurcation coefficients $\bfa$ and $\bfb$ are defined as
    \begin{equation*}
        \bfa=\sum_{k,i,j=1}^{6}v_kw_iw_j\frac{\partial^2f_k(\mathcal{E}^0,\beta_0^*)}{\partial x_i\partial x_j},\qquad
        \bfb=\sum_{k,i=1}^{6}v_kw_i\frac{\partial^2f_k(\mathcal{E}^0,\beta_0^*)}{\partial x_i\partial\beta_0^*}.
    \end{equation*}
    Then, the BSD model \eqref{eqn1} undergoes a backward bifurcation at $\beta_0=\beta_0^*$, i.e. $\mathcal{R}_0=1$, when $\bfa>0$, $\bfb>0$ and $\lambda>\lambda^{\text{crit}}$, where
    \begin{equation}
        \lambda^{\text{crit}}=\frac{\beta_0^*\Pi\mu\psi^2\eta^2[(1-\delta)+\delta\kappa(2+\delta)]}{\alpha\theta\Lambda^2(\mu_P+\rho)[\delta\kappa+(1-\delta)]}.
    \end{equation}
\end{thm}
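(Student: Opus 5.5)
The plan is to apply Theorem 4.1 of \cite{castillo} directly, with bifurcation parameter $\beta_0$ and critical value $\beta_0^*$ defined by $\mathcal{R}_0=1$ in \eqref{eqn5}. Three things have to be checked:
\begin{itemize}
\item[(i)] at $\beta_0=\beta_0^*$ the Jacobian $J=D_{\bfx}\bff(\mathcal{E}^0,\beta_0^*)$ has a simple zero eigenvalue and all other eigenvalues have negative real part;
\item[(ii)] there is a right null vector $\bfw$ and a left null vector $\bfv$ with $\bfv\cdot\bfw=1$, and $\bfv,\bfw$ have the correct signs (nonnegative on the infected coordinates);
\item[(iii)] the signs of $\bfa$ and $\bfb$ are as claimed, and $\bfa>0$ reduces exactly to $\lambda>\lambda^{\text{crit}}$.
\end{itemize}
Theorem 4.1 then gives a backward bifurcation when $\bfa>0$ and $\bfb>0$.

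For (i), note that $g(I)I=\lambda I^2/(1+\lambda I)$ has zero derivative at $I=0$. So the $S$-row of $J$ contains only $-(\mu_P+\rho)$ on the diagonal. The $H$- and $R$-rows contribute the eigenvalue $-\mu$ twice. The remaining $(E,I,A)$ block is $\mathcal{F}-\mathcal{V}$ from \eqref{eqn4}; its characteristic polynomial is a cubic. By Routh–Hurwitz, equivalently by the standard next-generation argument, this cubic has a simple zero root and two roots with negative real part exactly when $\mathcal{R}_0=1$.

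For (ii), I will solve $J\bfw=0$. This gives $w_S=0$ and
\[
w_I=\frac{\gamma w_E}{\mu+\rho},\qquad w_A=\frac{\eta w_I}{\mu_P+\rho},
\]
and then $w_H,w_R<0$ from the $H$- and $R$-rows, proportional to $\kappa w_A$ and $(1-\kappa)w_A$. Solving $\bfv J=0$ gives $v_H=v_R=0$ and $v_E,v_I,v_A>0$. The $S$-column additionally gives
\[
v_S=\frac{v_E\,\beta^*\theta\,(H^0+(1-\delta)R^0)}{\mu_P+\rho}>0.
\]
Thus ascospores enter $\bfa$ through $v_S$ even though they do not appear in $\mathcal{R}_0$.

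For $\bfb$, the only mixed derivative involving $\beta_0$ that survives is
\[
\frac{\partial^2 f_E}{\partial A\,\partial\beta_0}=\frac{\beta^*}{\beta_0^*}\,\psi\,\Phi>0 .
\]
Hence $\bfb=v_Ew_A\,\beta^*\psi\Phi/\beta_0^*>0$ automatically. This is consistent with the statement listing $\bfb>0$ as a condition, though it in fact always holds.

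For $\bfa$, since $v_H=v_R=0$, only $f_E,f_I,f_A,f_S$ matter, and $f_I,f_A$ are linear. There are two sources of second derivatives.
\begin{itemize}
\item \emph{Positive term:} the $S$-equation gives $\partial^2 f_S/\partial I^2=2\alpha\lambda$ at $I=0$. This contributes $2\alpha\lambda v_S w_I^2$, which carries the factor $\alpha\theta\Lambda$.
\item \emph{Negative terms:} the $E$-equation, through the standard-incidence term $\beta\psi A(H+(1-\delta)R)/N$, has nonzero derivatives $\partial^2/\partial A\partial H$, $\partial^2/\partial A\partial R$ and $\partial^2/\partial A\partial x_j$ for every $x_j$ entering $N$. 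With $w_H,w_R<0$ and the $N$-dependence, these give terms proportional to $\beta^*\psi\,\mu/\Lambda$ times $w_A$ times a combination of the $w_j$. After substituting the eigenvector relations they become proportional to $\psi^2\eta^2$ and to expressions in $\delta,\kappa$.
\end{itemize}
Setting $\bfa>0$ and isolating $\lambda$ should give $\lambda>\lambda^{\text{crit}}$. The factor $\Pi$ in the statement will collect the eigenvector normalisations and the rates $\gamma+\mu$, $\mu+\rho$, $\mu_P+\rho$.

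The main obstacle I expect is this last algebraic step. One must track the signs of all the $N$-derivative terms, since $N$ includes $E$, $I$ and the negative $w_H,w_R$, and verify that they combine to $[(1-\delta)+\delta\kappa(2+\delta)]$ over $[\delta\kappa+(1-\delta)]$. I would organise the computation by first writing $\bfw$ entirely in terms of $w_A$. I would then use $\mathcal{R}_0=1$ to eliminate $\beta^*$ wherever convenient, and only at the end compare the single positive $\lambda$-term with the aggregated negative terms.
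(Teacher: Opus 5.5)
Your route is the same as the paper's: Theorem 4.1 of Castillo-Chavez--Song, with the same null vectors ($w_S=0$, $w_H,w_R<0$; $v_H=v_R=0$, $v_S>0$) and the same unconditional $\bfb=v_Ew_A\Pi\psi\Phi>0$. Your steps (i)--(ii) are correct.

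The gap is step (iii). Once the $N$-dependence you rightly insist on is included, $\bfa>0$ does \emph{not} reduce to the stated $\lambda>\lambda^{\text{crit}}$. Write $\beta^*=\beta_0^*\Pi$ and $\Phi=\delta\kappa+(1-\delta)$; note that $\Pi$ is the environmental factor in $\beta(h,T)=\beta_0\Pi$, not a collector of normalisations. At $\mathcal{E}^0$ the second derivatives are
\[
\frac{\partial^2 f_E}{\partial A\,\partial H}=\frac{\beta^*\psi\mu\,\delta(1-\kappa)}{\Lambda},\qquad
\frac{\partial^2 f_E}{\partial A\,\partial R}=-\frac{\beta^*\psi\mu\,\delta\kappa}{\Lambda},\qquad
\frac{\partial^2 f_E}{\partial A\,\partial E}=\frac{\partial^2 f_E}{\partial A\,\partial I}=-\frac{\beta^*\psi\mu\,\Phi}{\Lambda}.
\]
Consequences:
\begin{itemize}
\item The $(A,R)$ contribution to $\bfa$ is positive, not negative.
\item The $H$ and $R$ contributions combine to a negative multiple of $\delta^2\kappa(1-\kappa)$.
\item The $E$ and $I$ terms give a separate piece linear in $\psi\eta$.
\end{itemize}
With $w_I=1$ this gives
\[
\bfa=2v_E\Bigl[\frac{\alpha\lambda\beta^*\theta\Lambda\Phi}{\mu(\mu_P+\rho)}-\frac{(\beta^*\psi\eta)^2\delta^2\kappa(1-\kappa)}{\Lambda(\mu_P+\rho)^2}-\frac{\beta^*\psi\eta\,\mu\Phi(\gamma+\mu+\rho)}{\Lambda\gamma(\mu_P+\rho)}\Bigr],
\]
so $\bfa>0$ if and only if
\[
\lambda>\frac{\mu\psi\eta}{\alpha\theta\Lambda^2}\Bigl[\frac{\beta^*\psi\eta\,\delta^2\kappa(1-\kappa)}{(\mu_P+\rho)\Phi}+\frac{\mu(\gamma+\mu+\rho)}{\gamma}\Bigr].
\]
This is not the stated $\lambda^{\text{crit}}$, and no bookkeeping can make it so. Take $\kappa=1$: there is no resistant biomass at $\mathcal{E}^0$, $w_R=0$ and $\Phi=1$, so $\delta$ cannot enter $\bfa$. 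Yet the stated $\lambda^{\text{crit}}$ carries the factor $(1-\delta)+\delta(2+\delta)=1+\delta+\delta^2$.

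The paper reaches its constant by differentiating as though $N\equiv N^0=\Lambda/\mu$. It uses $\beta^*\psi\mu/\Lambda$ and $(1-\delta)\beta^*\psi\mu/\Lambda$ for the $(A,H)$ and $(A,R)$ derivatives and omits the $E$ and $I$ terms. That is not justified, because $dN/dt=\Lambda-\mu N-\rho I$. Even under that simplification, $w_H+(1-\delta)w_R$ is proportional to $\kappa+(1-\delta)^2(1-\kappa)$, not to $(1-\delta)+\delta\kappa(2+\delta)$.

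So the equivalence you plan to verify in (iii) is false and should be dropped. What you can prove is the literal statement. Since $\bfa>0$ and $\bfb>0$ are among its hypotheses, your (i), (ii) and Theorem 4.1 already give the backward bifurcation, and the condition $\lambda>\lambda^{\text{crit}}$ plays no role. If you want a genuine threshold in $\lambda$, it is the one displayed above.
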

\begin{proof}
    Setting $\mathcal{R}_0=1$ and solving for $\beta_0^*$ in \eqref{eqn5} gives;
    \begin{equation}\label{eqn7}
        \beta_0^*=\frac{(\gamma+\mu)(\mu_P+\rho)(\mu+\rho)}{\Pi\psi\eta\gamma[\delta\kappa+(1-\delta)]},
    \end{equation}
    where $\Pi=\frac{h}{h+K_h}\exp{\left(-\frac{(T-\hat{T})^2}{2\sigma_T^2}\right)}$. The Jacobian matrix of the BSD model \eqref{eqn1} evaluated at the BSD-free equilibrium, $\mathcal{E}^0$ with $\beta_0=\beta_0^*$ is
    \begin{equation}\label{eqn8}
        \mathcal{J}(\mathcal{E}^0,\beta_0^*)=
        \begin{bmatrix}
            -\mu&0&0&0&-\beta_0^*\Pi\psi\kappa&-\frac{\beta_0^*\Pi\theta\kappa\Lambda}{\mu}\\
            0&-\mu&0&0&-\beta_0^*\Pi\psi(1-\delta)(1-\kappa)&-\frac{\beta_0^*\Pi\theta\Lambda(1-\delta)(1-\kappa)}{\mu}\\
            0&0&-(\gamma+\mu)&0&\beta_0^*\Pi\psi[\delta\kappa+(1-\delta)]&\frac{\beta_0^*\Pi\theta\Lambda[\delta\kappa+(1-\delta)]}{\mu}\\
            0&0&\gamma&-(\mu+\rho)&0&0\\
            0&0&0&\eta&-(\mu_P+\rho)&0\\
            0&0&0&0&0&-(\mu_P+\rho)
        \end{bmatrix}.
    \end{equation}
    This matrix has $\lambda_1=\lambda_2=-\mu$, and $\lambda_6=-(\mu_P+\rho)$ as eigenvalues. The remaining eigenvalues are obtained from the sub-matrix
    \begin{equation}\label{eqn9}
        \hat{\mathcal{J}}(\mathcal{E}^0,\beta_0^*)=
        \begin{bmatrix}
            -(\gamma+\mu)&0&\beta_0^*\Pi\psi[\delta\kappa+(1-\delta)]\\
            \gamma&-(\mu+\rho)&0\\
            0&\eta&-(\mu_P+\rho)
        \end{bmatrix},
    \end{equation}
    for which the associated characteristic polynomial at $\chi$ is
    \begin{equation}
        |\chi\bfI-\hat{\mathcal{J}}(\mathcal{E}^0,\beta_0^*)|=(\chi+\gamma+\mu)(\chi+\mu+\rho)(\chi+\mu_P+\rho)-\beta_0^*\Pi\psi\gamma\eta[\delta\kappa+(1-\delta)].
    \end{equation}
    At $\chi=0$, there is an eigenvalue when
    \begin{align*}
        (\gamma+\mu)(\mu+\rho)(\mu_P+\rho)-\beta_0^*\Pi\psi\gamma\eta[\delta\kappa+(1-\delta)]&=0,\\
        \frac{\beta_0^*\Pi\psi\gamma\eta[\delta\kappa+(1-\delta)]}{(\gamma+\mu)(\mu+\rho)(\mu_P+\rho)}&=1,\\
        \mathcal{R}_0&=1.
    \end{align*}
    Therefore, at $\beta_0=\beta_0^*$ that is $\mathcal{R}_0=1$, the Jacobian matrix associated with the BSD model \eqref{eqn1}  evaluated at $\mathcal{E}^0$ has a simple zero eigenvalue (with all other eigenvalues having negative real part). Thus, the Jacobian matrix has a right eigenvector given by $\bfw=(w_1,w_2,w_3,w_4,w_5,w_6)$ and a left eigenvector given by $\bfv=(v_1,v_2,v_3,v_4,v_5,v_6)$ corresponding to the zero eigenvalue. A right eigenvector satisfies
    \begin{equation}\label{eqn10}
        \begin{bmatrix}
            -\mu&0&0&0&-\beta_0^*\Pi\psi\kappa&-\frac{\beta_0^*\Pi\theta\kappa\Lambda}{\mu}\\
            0&-\mu&0&0&-\beta_0^*\Pi\psi(1-\delta)(1-\kappa)&-\frac{\beta_0^*\Pi\theta\Lambda(1-\delta)(1-\kappa)}{\mu}\\
            0&0&-(\gamma+\mu)&0&\beta_0^*\Pi\psi[\delta\kappa+(1-\delta)]&\frac{\beta_0^*\Pi\theta\Lambda[\delta\kappa+(1-\delta)]}{\mu}\\
            0&0&\gamma&-(\mu+\rho)&0&0\\
            0&0&0&\eta&-(\mu_P+\rho)&0\\
            0&0&0&0&0&-(\mu_P+\rho)
        \end{bmatrix}
        \begin{bmatrix}
            w_1\\w_2\\w_3\\w_4\\w_5\\w_6
        \end{bmatrix}
        =
        \begin{bmatrix}
            0\\0\\0\\0\\0\\0
        \end{bmatrix}
    \end{equation}
    Solving for $w_1$, $w_2$, $w_3$, and $w_5$ in terms of $w_4$ yields
    \begin{align*}
        w_1&=-\frac{\beta_0^*\Pi\psi\kappa\eta}{\mu(\mu_P+\rho)}w_4,\;\;
        w_2=-\frac{\beta_0^*\Pi\psi\eta(1-\delta)(1-\kappa)}{\mu(\mu_P+\rho)}w_4,\;\;
        w_3=\frac{\mu+\rho}{\gamma}w_4,\\
        w_4&>0,\;\;
        w_5=\frac{\eta}{\mu_P+\rho}w_4,\;\;
        w_6=0.
    \end{align*}
    A left eigenvector satisfies
    \begin{equation}\label{eqn11}
        \begin{bmatrix}
            -\mu&0&0&0&0&0\\
            0&-\mu&0&0&0&0\\
            0&0&-(\gamma+\mu)&\gamma&0&0\\
            0&0&0&-(\mu+\rho)&\eta&0\\
            -\beta_0^*\Pi\psi\kappa&-\beta_0^*\Pi\psi(1-\delta)(1-\kappa)&\beta_0^*\Pi\psi[\delta\kappa+(1-\delta)]&0&-(\mu_P+\rho)&0\\
            -\frac{\beta_0^*\Pi\theta\kappa\Lambda}{\mu}&-\frac{\beta_0^*\Pi\theta\Lambda(1-\delta)(1-\kappa)}{\mu}&\frac{\beta_0^*\Pi\theta\Lambda[\delta\kappa+(1-\delta)]}{\mu}&0&0&-(\mu_P+\rho)
        \end{bmatrix}
        \begin{bmatrix}
            v_1\\v_2\\v_3\\v_4\\v_5\\v_6
        \end{bmatrix}
        =
        \begin{bmatrix}
            0\\0\\0\\0\\0\\0
        \end{bmatrix}
    \end{equation}
    Solving for $v_4$, $v_5$, and $v_6$ in terms of $v_3$ gives
    \begin{align*}
        v_1&=0,\;
        v_2=0,\;
        v_3>0,\;\;
        v_4=\frac{\gamma+\mu}{\gamma}v_3,\;\;
        v_5=\frac{(\mu+\rho)(\gamma+\mu)}{\eta\gamma}v_3,\;\;
        v_6=\frac{\beta_0^*\Pi\theta\Lambda[\delta\kappa+(1-\delta)]}{\mu(\mu_P+\rho)}v_3.
    \end{align*}
    Since $v_1$ and $v_2$ are zero, we only compute the second-order partial derivatives of $f_3$, $f_4$, $f_5$,  and $f_6$,
    \begin{align*}
        \frac{\partial^2f_3(\mathcal{E}^0,\beta_0^*)}{\partial H\partial S}&=\frac{\partial^2f_3(\mathcal{E}^0,\beta_0^*)}{\partial S\partial H}=\beta_0^*\Pi\theta,\\
        \frac{\partial^2f_3(\mathcal{E}^0,\beta_0^*)}{\partial H\partial A}&=\frac{\partial^2f_3(\mathcal{E}^0,\beta_0^*)}{\partial A\partial H}=\frac{\beta_0^*\Pi\psi\mu}{\Lambda},\\
        \frac{\partial^2f_3(\mathcal{E}^0,\beta_0^*)}{\partial R\partial S}&=\frac{\partial^2f_3(\mathcal{E}^0,\beta_0^*)}{\partial S\partial R}=\beta_0^*\Pi\theta(1-\delta),\\
        \frac{\partial^2f_3(\mathcal{E}^0,\beta_0^*)}{\partial R\partial A}&=\frac{\partial^2f_3(\mathcal{E}^0,\beta_0^*)}{\partial A\partial R}=\frac{\beta_0^*\Pi\psi\mu(1-\delta)}{\Lambda},\\
        \frac{\partial^2f_6(\mathcal{E}^0,\beta_0^*)}{\partial I^2}&=2\alpha\lambda,\\
        \frac{\partial^2f_3(\mathcal{E}^0,\beta_0^*)}{\partial A\partial\beta_0^*}&=\Pi\psi\mu[\delta\kappa+(1-\delta)],\\
        \frac{\partial^2f_3(\mathcal{E}^0,\beta_0^*)}{\partial S\partial\beta_0^*}&=\frac{\Pi\theta\Lambda[\delta\kappa+(1-\delta)]}{\mu}.
    \end{align*}
    Using these expressions in the definitions of $\bfa$ gives 
    \begin{equation}
        \bfa=2v_3w_5\frac{\beta_0^*\Pi\psi\mu}{\Lambda}\left[w_1+(1-\delta)w_2\right]+2v_6w_4^2\alpha\lambda.
    \end{equation}
    Substituting the values of $w_1$, $w_2$, $w_5$, and setting $w_4=1$ yields
    \begin{equation}
        \bfa=\frac{2v_3\alpha\lambda\beta_0^*\Pi\theta\Lambda[\delta\kappa+(1-\delta)]}{\mu(\mu_P+\rho)}-\frac{2v_3(\beta_0^*\Pi\psi\eta)^2[(1-\delta)+\delta\kappa(2+\delta)]}{\Lambda(\mu_P+\rho)}.
    \end{equation}
    Therefore, if 
    \begin{equation}\label{eqn12}
        \lambda>\underbrace{\frac{\beta_0^*\Pi\mu\psi^2\eta^2[(1-\delta)+\delta\kappa(2+\delta)]}{\alpha\theta\Lambda^2(\mu_P+\rho)[\delta\kappa+(1-\delta)]}}_{\lambda^{\text{crit}}}
    \end{equation}
    then $\bfa>0$. Moreover, 
    \begin{equation}
        \bfb=v_3w_5\Pi\psi[\delta\kappa+(1-\delta)]=\frac{\Pi\psi\eta[\delta\kappa+(1-\delta)]}{\mu_P+\rho}v_3>0.
    \end{equation}
    By applying Theorem 4.1 of Castillo-Chavez and Song, we conclude that the BSD model \eqref{eqn1} undergoes a backward bifurcation. 
    \end{proof}
    
    Condition \eqref{eqn12} means that the disease can persist at low values of $\mathcal{R}_0$ because the high biomass of infected leaves in an established epidemic overcomes the initial difficulty of mating spores. Using the parameter values from Table \ref{table2}, we have $\bfa=1.4700\times10^7$ and $\lambda>\lambda^{\text{crit}}=0.0011$. This indicates that sexual reproduction of the pathogen occurs well above the threshold required to induce backward bifurcation. Since $\lambda>\lambda^{\text{crit}}$, the mate limitation in ascospores is a powerful driver of the epidemic's persistence. Thus, the Black Sigatoka Disease may persist even when $\mathcal{R}_0<1$, and reducing $\mathcal{R}_0$ below unity may not guarantee disease eradication.

\section{Sensitivity analysis}\label{sec:sensitivity}
Sensitivity analysis is an analytical tool used to quantify the relative impacts of uncertainties in the input parameters for endemic-infected leaf biomass ($I^*$) and their subsequent effects on disease spread. Thus, sensitivity analysis can determine how input variability affects output variability. 
\subsection{Local analysis using the Normalized Forward Sensitivity Index}
To quantify the effect of parameter perturbations on $I^*$, we use the normalized forward sensitivity index introduced by Chitnis et al. \cite{chitnis}. For a parameter $\xi$, the index is defined by:
\begin{equation}\label{eqn6}
    \Upsilon_{\xi}^{I^*}=\frac{\partial I^*}{\partial\xi}\times\frac{\xi}{I^*}.
\end{equation}
This gives the relative change in $I^*$ per unit relative change in the parameter $\xi$. A positive index indicates that the $I^*$ increases as the parameter value increases, whereas a negative index indicates that $I^*$ decreases as the parameter value increases. This indicates that parameters with negative indices may mitigate disease spread in the population as their values increase.

Figure \ref{nfsi} shows that the fraction of newly recruited susceptible leaves, $\kappa$, is the most influential parameter with an index of 1.47. Thus, a $1\%$ increase in the fraction of newly recruited susceptible leaves leads to a $1.47\%$ increase in $I^*$. In contrast, parameters such as $\beta_0$, $\Lambda$, $\rho$, $\mu_P$, $\mu$, and other parameters in $I^*$ have indices close to zero, indicating weak local influence near the baseline parameter set. Thus, the recruitment structure and host susceptibility drive the $I^*$ far more strongly than most other biological processes do.

\begin{figure}[!htbp]
	\centering
	\includegraphics[width=0.7\linewidth]{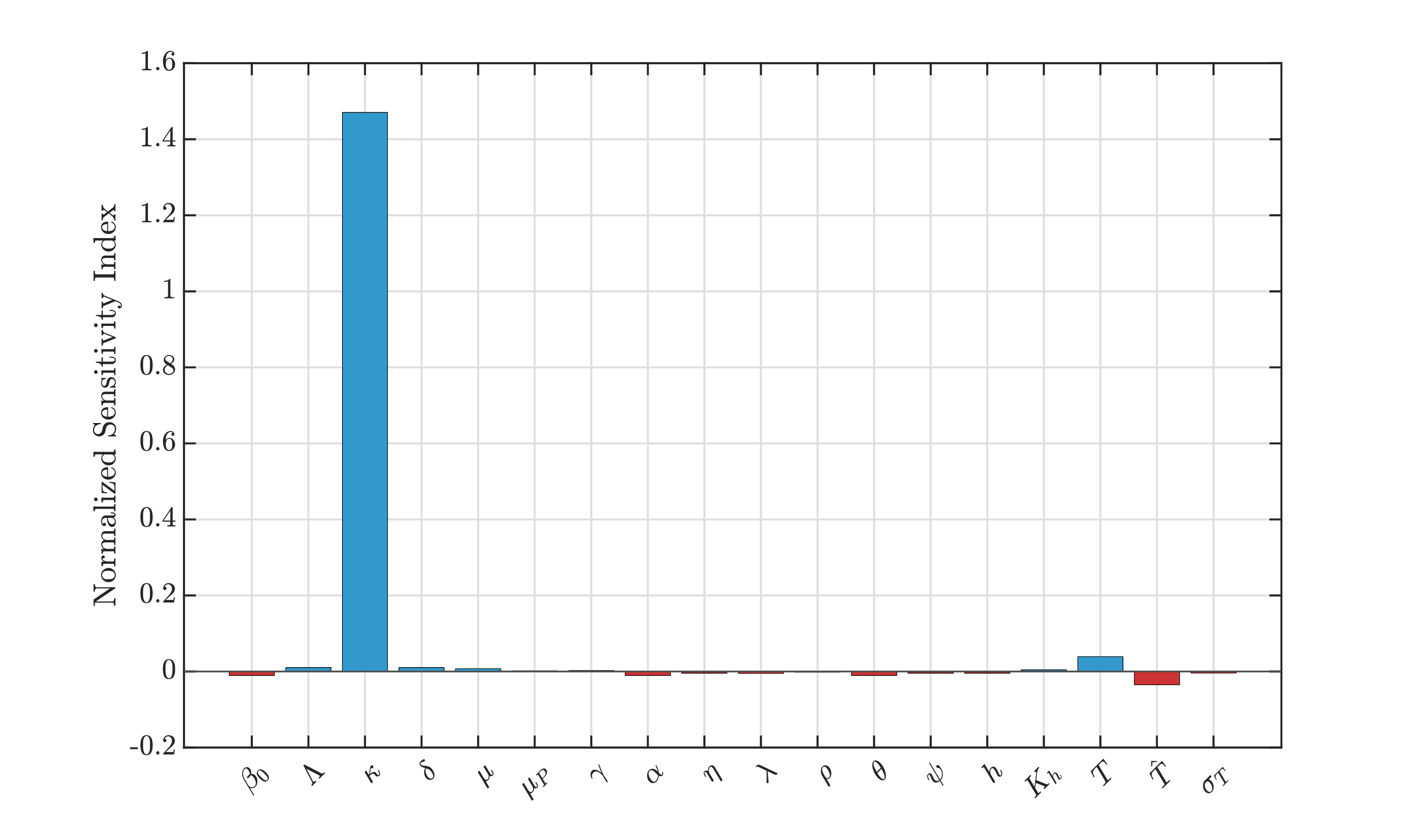}
    \caption{The normalized forward sensitivity indices of the parameters in the endemic-infected leaf biomass ($I^*$) generated using the parameter values stated in Table \ref{table2}.}
    \label{nfsi}
\end{figure}

\subsection{Global analysis using the Partial Rank Correlation Coefficient (PRCC)}
To assess the global influence of model parameters on the endemic-infected leaf biomass $I^*$, we use Partial Rank Correlation Coefficients (PRCCs) combined with Latin Hypercube Sampling (LHS). This approach evaluates the monotonic relationship between each parameter and $I^*$ while accounting for simultaneous variations in all other parameters. PRCC values close to 1 (-1) indicate a strong positive (negative) global influence on $I^*$, whereas values near zero indicate weak or negligible effects.

Figure \ref{prcc}, the host recruitment parameters $\Lambda$ and $\kappa$ have the strongest positive correlations, indicating that $I^*$ is primarily driven by the continuous supply of susceptible leaf biomass. The progression rate $\gamma$ also has a moderate global positive effect on $I^*$. In contrast, the cultural practices parameter $\rho$ shows a negative correlation, indicating that the removal of infected biomass is the most effective control mechanism. The natural leaf decay rate $\mu$ also contributes negatively by reducing available host biomass.

The violin plots in Figure \ref{prcc_vio} illustrate the robustness of these results. Parameters such as $\Lambda$, $\kappa$, $\gamma$, $\rho$, and $\mu$ display narrow distributions centered away from zero, indicating consistent and statistically stable effects across simulations, whereas parameters with distributions concentrated near zero exhibit minimal influence. The results confirm that host recruitment dynamics and cultural practices dominate the global behavior of $I^*$ in the BSD model \eqref{eqn1}.

\begin{figure}[!htbp]
    \begin{subfigure}{0.55\textwidth}
	\centering
	\includegraphics[width=1.0\linewidth]{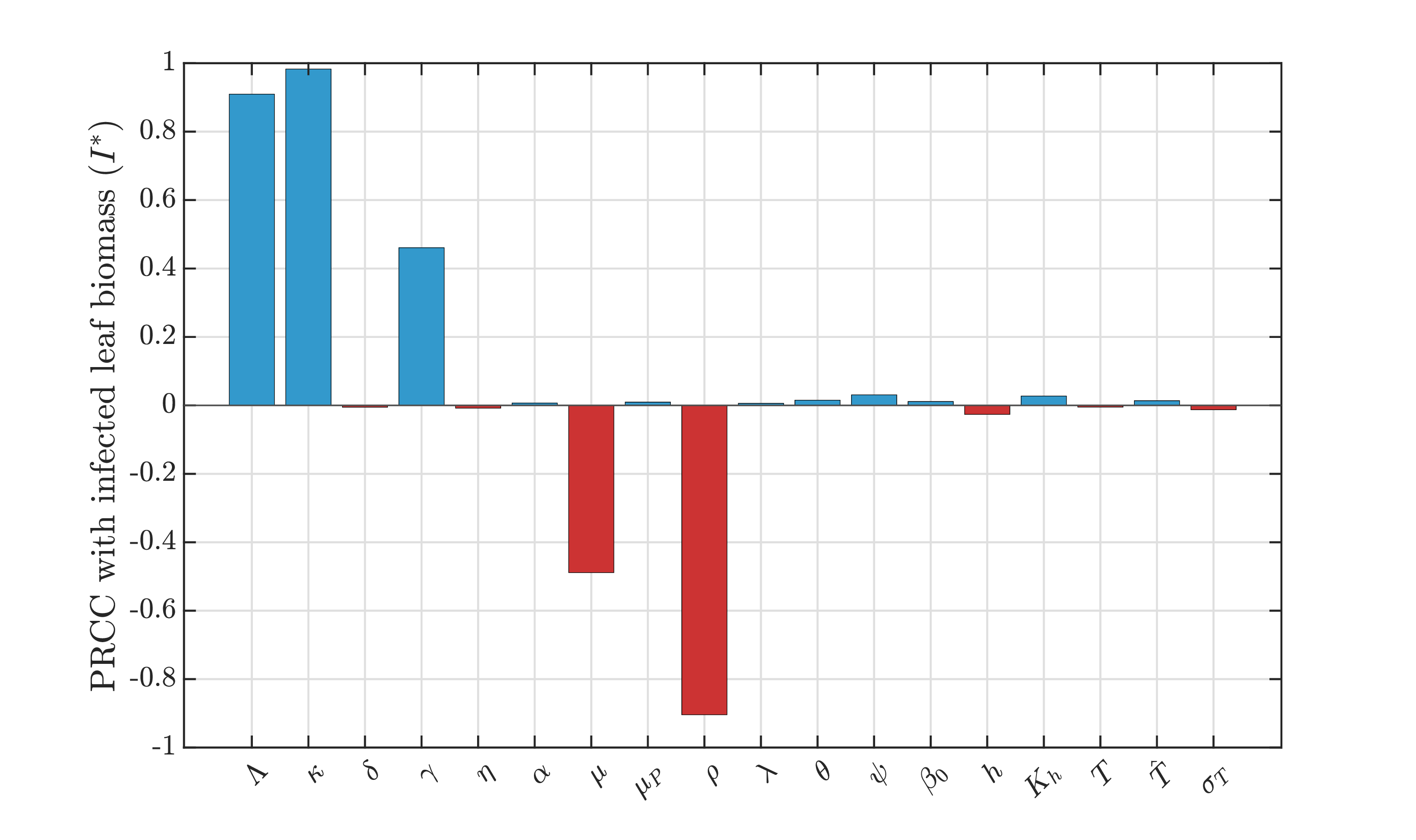}
        \caption{}
        \label{prcc}
    \end{subfigure}
    \begin{subfigure}{0.55\textwidth}
	\centering
	\includegraphics[width=1.0\linewidth]{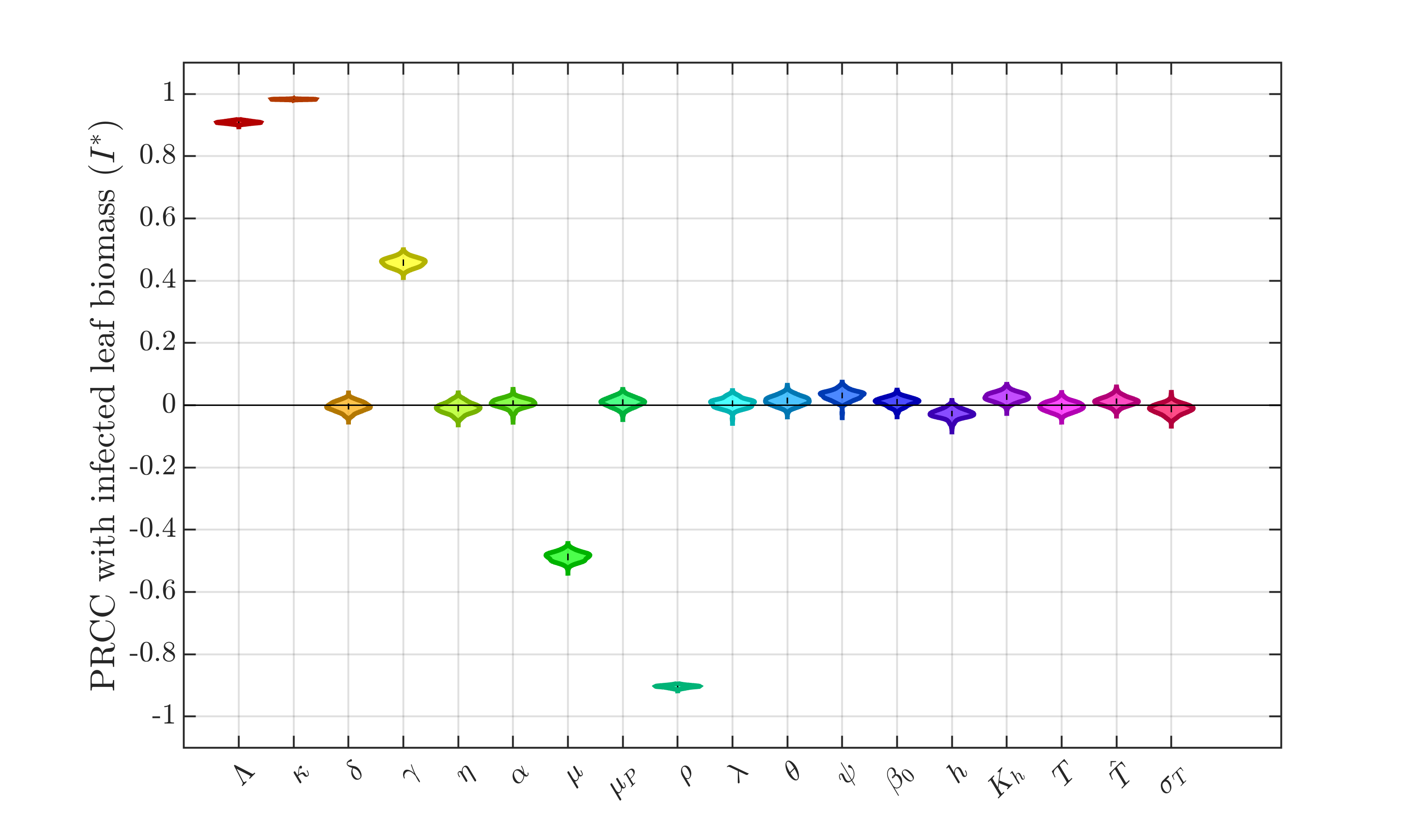}
        \caption{}
        \label{prcc_vio}
    \end{subfigure}
    \caption{\textbf{(a)} The Partial Rank Correlation Coefficients (PRCCs) results capturing the global sensitivity indices of the parameters in the infected biomass equilibrium ($I^*$). \textbf{(b)} The Violin plot of the Partial Rank Correlation Coefficients (PRCCs) values for the parameters in $I^*$ (Generated distribution around PRCC values.) Generated using the parameter values stated in Table \ref{table2}.}
    \label{fig3}
\end{figure}

\clearpage
\section{Stochastic modeling using the Gillespie algorithm}\label{sec:sto}
To incorporate intrinsic randomness in disease transmission, we formulate a stochastic version of the BSD model using the Gillespie Algorithm, also known as the Stochastic Simulation Algorithm (SSA), \cite{gillespie}. In this framework, the biomass variables are treated as discrete ``leaf units" or effective spore units, so that the stochastic process approximates plantation-scale dynamics while retaining the event-by-event structure of the SSA. Although leaf biomass is continuous in reality, this discretization provides a reasonably effective approximation for finite host populations and enables the quantification of demographic stochasticity near eradication thresholds, where deterministic models may fail to capture variability accurately.

The stochastic model tracks the state of the vector process $\bfY=(H, R, E, I, A, S)$, where leaf and spore biomasses are represented as in the previous section. The system's evolution is governed by a set of discrete events, each associated with a propensity function $a_i(\bfY)$ that defines the probability of occurrence in an infinitesimal time interval, and a corresponding stoichiometric vector $\nu_i$. The full list of events and propensities is provided in Table \ref{table4}.

At each step, the Gillespie algorithm computes the total propensity, determines the time to the next event, and selects which event occurs based on probabilistic sampling. The system is then updated accordingly, producing stochastic trajectories that reflect variability in disease dynamics. This framework enables comparison with deterministic predictions and provides insight into fluctuations, persistence, and eradication behavior in the presence of stochastic effects.

\begin{table}[!ht]
    \caption{List of events, propensity functions, and state changes used in the stochastic simulation of Black Sigatoka Disease dynamics using the Gillespie algorithm.}
    \label{table4}
    \begin{tabular}{lllll@{}}
        \toprule
         $i$ & Event & Propensity function $a_i(\bfY)$ & State change $\nu_i$ \\
        \midrule
        1 & Recruitment of susceptible leaves & $\kappa \Lambda$ & $H \rightarrow H + 1$\\
        2 & Recruitment of resistant leaves & $(1 - \kappa)\Lambda$  & $R \rightarrow R + 1$\\
        3 & Natural decay of susceptible leaves & $\mu H$  & $H \rightarrow H - 1$\\
        4 & Natural decay of resistant leaves & $\mu R$ & $R\rightarrow R-1$\\
        5 & Infection of susceptible leaves via asexual spores & $\beta(h,T)\psi\frac{A}{N}H$ & $H\rightarrow H-1$,\\
        &&$E\rightarrow E+1$\\
        6 & Infection of susceptible leaves via sexual spores & $\beta(h,T)\theta SH$ & $H\rightarrow H-1$,\\ 
        &&$E\rightarrow E+1$\\
        7 & Infection of resistant leaves via asexual spores & $(1-\delta)\beta(h,T)\psi\frac{A}{N}R$ & $R\rightarrow R-1$,\\ 
        &&$E\rightarrow E+1$\\
        8 & Infection of resistant leaves via sexual spores & $(1-\delta)\beta(h,T)\theta SR$ & $R\rightarrow R-1$,\\ 
        &&$E\rightarrow E+1$\\
        9 & Progression from exposed to infected & $\gamma E$ & $E\rightarrow E-1$,\\ 
        &&$I\rightarrow I+1$\\
        10 & Natural decay of exposed leaves & $\mu E$ & $E\rightarrow E-1$\\
        11 & Natural decay of infected leaves & $(\mu+\rho)I$ & $I\rightarrow I-1$\\
        12 & Asexual spore production & $\eta I$ & $A\rightarrow A+1$\\
        13 & Sexual spore production & $\alpha\frac{\lambda I}{1+\lambda I}I$ & $S\rightarrow S+1$\\
        14 & Loss of asexual spores & $(\mu_P+\rho)A$ & $A\rightarrow A-1$\\
        15 & Loss of sexual spores & $(\mu_P+\rho)S$ & $S\rightarrow S-1$ \\
        \bottomrule
    \end{tabular}
\end{table}
\begin{algorithm}[H]
    \caption{Gillespie algorithm for the BSD model}
    \begin{algorithmic}[1]
        \State  \textbf{Input:} Initial state $[H_0, R_0, E_0, I_0, A_0, S_0]$, parameters, $t_{\text{end}}$, $N_{\max}$.
        \State Initialize $t \gets 0$, set initial states and storage arrays.
        \While{$t < t_{\text{end}}$ and step $\leq N_{\max}$}
        \State Compute all propensities $a_1, ..., a_{15}$ for all events
        \State Compute total propensity: $a_0\gets\sum_{i=1}^{15}a_i$
        \If{$a_0=0$}
            \State\textbf{break}
        \EndIf
        \State Generate two random numbers $r_1,r_2\sim \mathcal{U}(0,1)$
        \State Compute time to next event: $\tau\gets-\frac{\ln(r_1)}{a_0}$
        \State Determine event $j$ such that: $\sum_{i=1}^{j-1}a_i<a_0r_2\leq\sum_{i=1}^{j}a_i$
        \State Update state variables and time $t\gets t+\tau$
        \State Repeat until $t>t_{\text{end}}$
        \EndWhile
        \State \textbf{Output:} Trajectories of $H(t), R(t), E(t), I(t), A(t), S(t)$ over time
    \end{algorithmic}
\end{algorithm}

\subsection{Numerical simulations}\label{sec:sim}
We present numerical simulations of deterministic and stochastic BSD models to illustrate disease dynamics, variability, and the probability of disease eradication. Deterministic solutions are obtained from the BSD model \eqref{eqn1}, while the stochastic realizations are generated using Gillespie SSA.

Figures \ref{fig13} and \ref{fig14} compare the temporal evolution of leaf and spore biomass under the deterministic and stochastic frameworks, respectively. Stochastic simulations capture variability through individual realizations; the mean across 400 realizations is shown (red curves), together with the corresponding standard deviation. Deterministic trajectories are shown in black. 
The stochastic paths (blue) exhibit fluctuations that are particularly important during the early stages and at low population levels, reflecting the role of demographic stochasticity. In the leaf compartments (Figure \ref{fig13}), stochastic effects are most pronounced in the dynamics of susceptible and infected biomass, where variability is highest during transient phases. For the pathogen compartments (Figure \ref{fig14}), asexual spores show moderate variability, whereas sexual spores display highly intermittent, burst-like behavior due to mate limitation, a feature not captured by the deterministic model. 

The probability of eradication of the disease is approximated in Figure \ref{fig15}. The plot was obtained by simulating the system across a range of initial host biomass $H(0)+R(0)=5,10,\cdots,10000$. For each population size, 500 independent stochastic realizations were generated over a fixed time horizon of $t_{\max}=100$ days. The initial host composition was chosen such that $H(0)=0.6(H(0)+R(0))$ and $R(0)=0.4(H(0)+R(0))$, maintaining a consistent proportion of susceptible and resistant leaves across all simulations, while small initial pathogen populations were fixed at $E(0)=1$, $I(0)=1$, $A(0)=2$, and  $S(0)=2$. A realization is classified as eradicated if the infected class $I(t)$ reaches zero before the time horizon is reached. The probability of eradication shown in Figure \ref{fig15} is computed as the fraction of realizations that lead to eradication across the 500 trials for each initial population size. Each red point represents the estimated probability, while the solid black curve represents a smoothed trend of these values. These results demonstrate that stochastic effects can substantially influence BSD dynamics, particularly in small populations, where they may alter persistence and eradication outcomes relative to deterministic predictions.

For the mean eradication time analysis of the Black Sigatoka Disease in Figure \ref{fig16}, 500 independent realizations of the model were simulated up to a fixed time horizon of $t_{\max}=100$ days. In each realization, the time to disease eradication was recorded when all infection-related compartments vanished. The mean eradication time was then computed across all realizations, providing a robust measure of eradication dynamics that accounts for both early eradication events and long-term persistence. Figure \ref{fig16} shows that when the initial host biomass $H(0)+R(0)$ is small, the disease tends to die out relatively quickly. However, as the biomass increases, eradication becomes progressively slower, with the mean eradication time approaching the maximum simulation time $t_{\max}=100$ days. This reflects the biological intuition that larger host populations provide more opportunities for pathogen persistence, making eradication less likely and more delayed.

These results also have implications for post-sanitation monitoring. In particular, the sharp decline in eradication probability and the increase in mean eradication time with host biomass suggest the existence of operational host-density thresholds below which follow-up sanitation and surveillance are most likely to succeed. Thus, maintaining plantations below critical post-intervention host biomass levels may improve the probability of pathogen fade-out and reduce the monitoring period required to confirm eradication. 

\begin{figure}[!ht]
    \centering
    \includegraphics[width=1.0\linewidth]{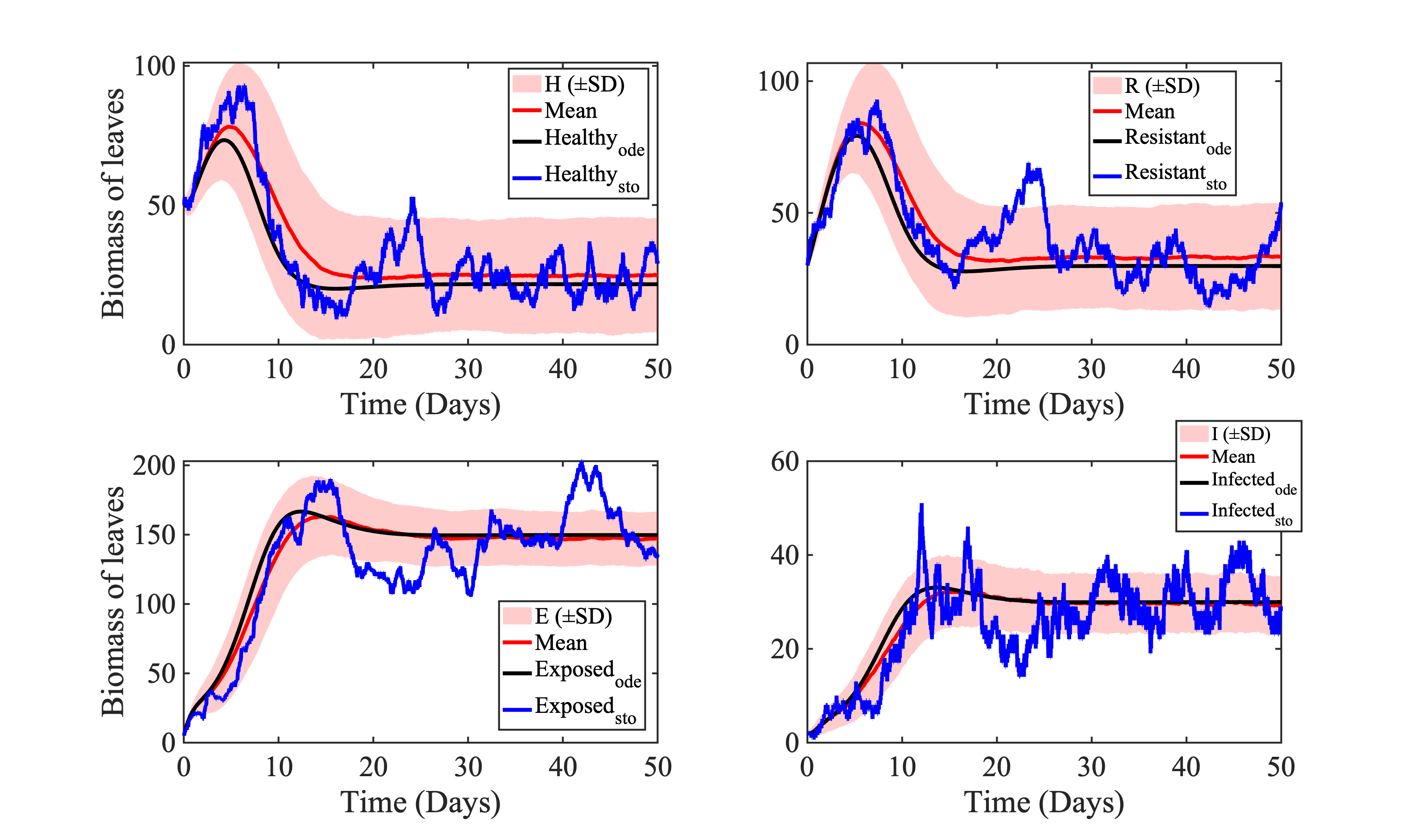}
    \caption{Comparison of deterministic and stochastic simulations of the BSD model over 50 days. Each panel shows the biomass dynamics of a leaf compartment: susceptible leaves $H(t)$, resistant leaves $R(t)$, exposed leaves $E(t)$, and infected leaves $I(t)$. The black curves represent the deterministic trajectories; the blue curves are a single realization of the stochastic simulation; and the red lines indicate the mean across 400 stochastic realizations. Shaded regions show ±1 standard deviation (SD) from the mean of the stochastic simulations. The stochastic system exhibits larger fluctuations, especially in the early dynamics of $H(t)$, $R(t)$, and $I(t)$, capturing variability not seen in the deterministic counterpart.}
    \label{fig13}
\end{figure}

\begin{figure}[!ht]
    \centering
    \includegraphics[width=1.0\linewidth]{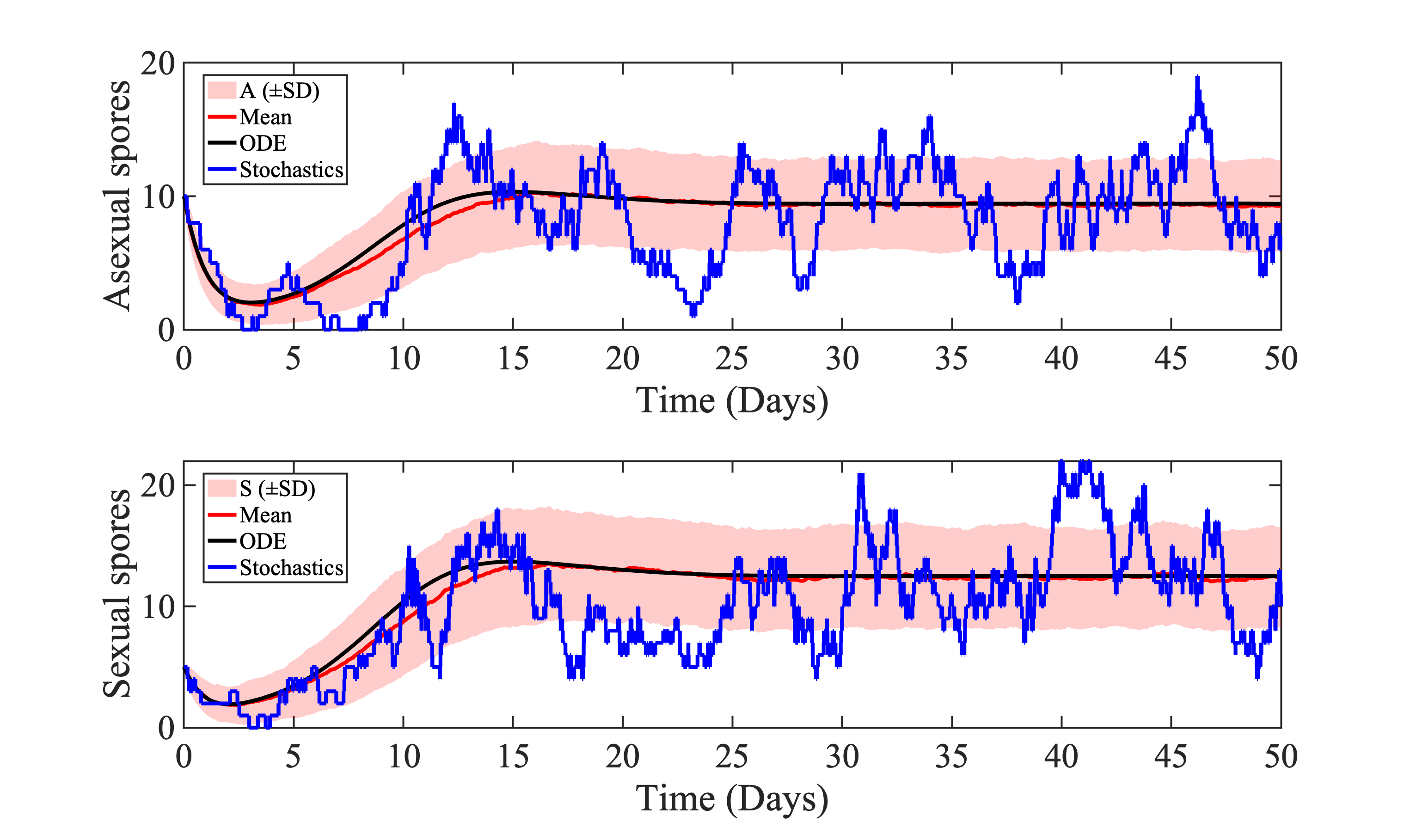}
    \caption{Temporal dynamics of pathogen spores comparing deterministic and stochastic simulations. The top panel shows the biomass of asexual spores $A(t)$, and the bottom panel displays the biomass of sexual spores $S(t)$ over 50 days. The asexual spores exhibit moderate stochastic fluctuations, while the sexual spores show highly discrete and burst-like dynamics due to mating constraints at low densities, which are not captured in the smooth ODE solutions.}
    \label{fig14}
\end{figure}

\begin{figure}[!ht]
    \centering
    \includegraphics[width=1.0\linewidth]{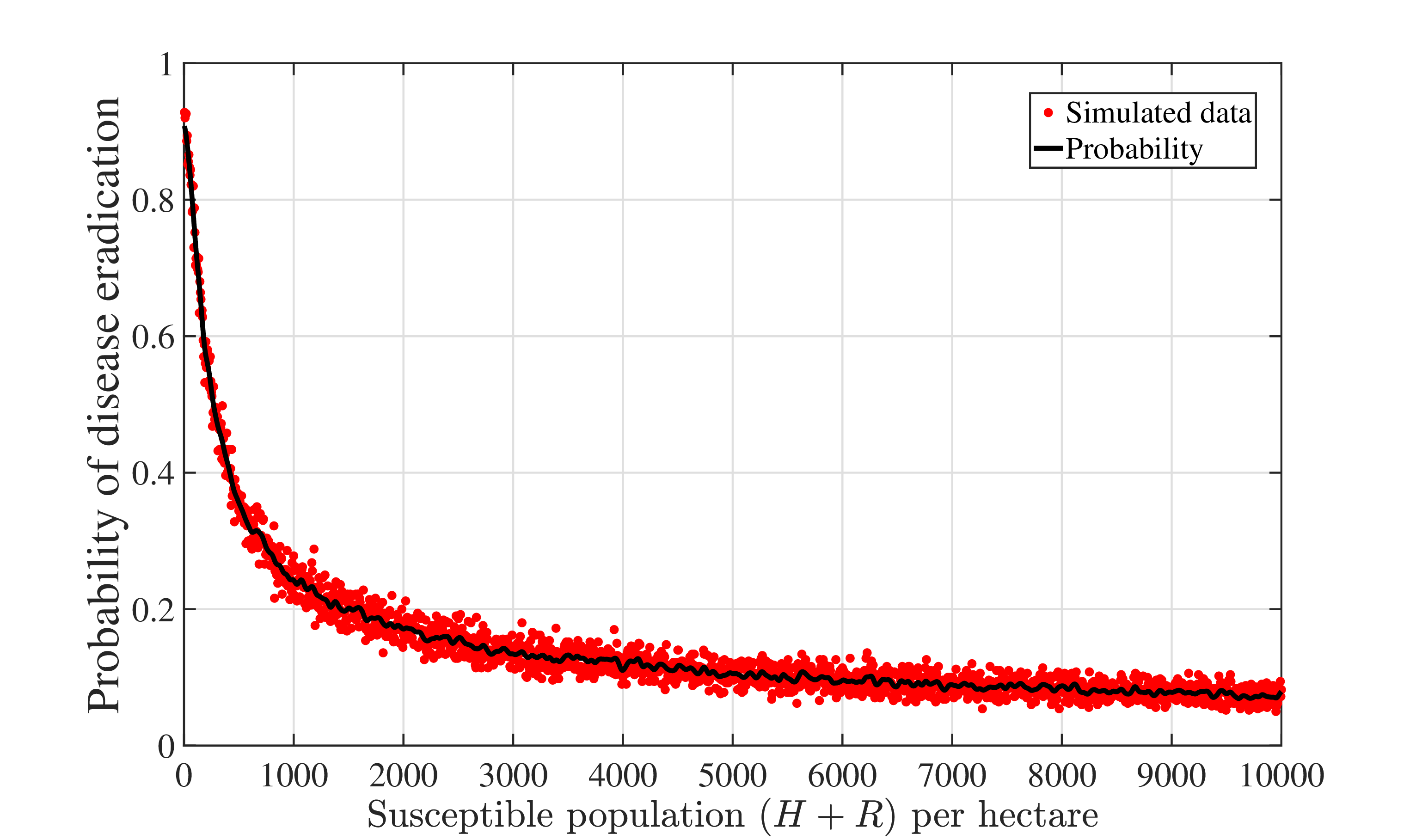}
    \caption{Eradication probability in the model of Black Sigatoka disease as a function of the total initial susceptible and resistant leaf biomass ($H(0) + R(0)$, in leaves per hectare), estimated from 500 stochastic simulations for each initial biomass level. The probability of disease eradication is high when the initial leaf biomass is small (particularly below approximately 1000 leaves per hectare). As the leaf biomass increases, the eradication probability decreases rapidly and stabilizes at a low value, indicating that larger host densities support sustained disease persistence.}
    \label{fig15}
\end{figure}

\begin{figure}[!ht]
    \centering
    \includegraphics[width=1.0\linewidth]{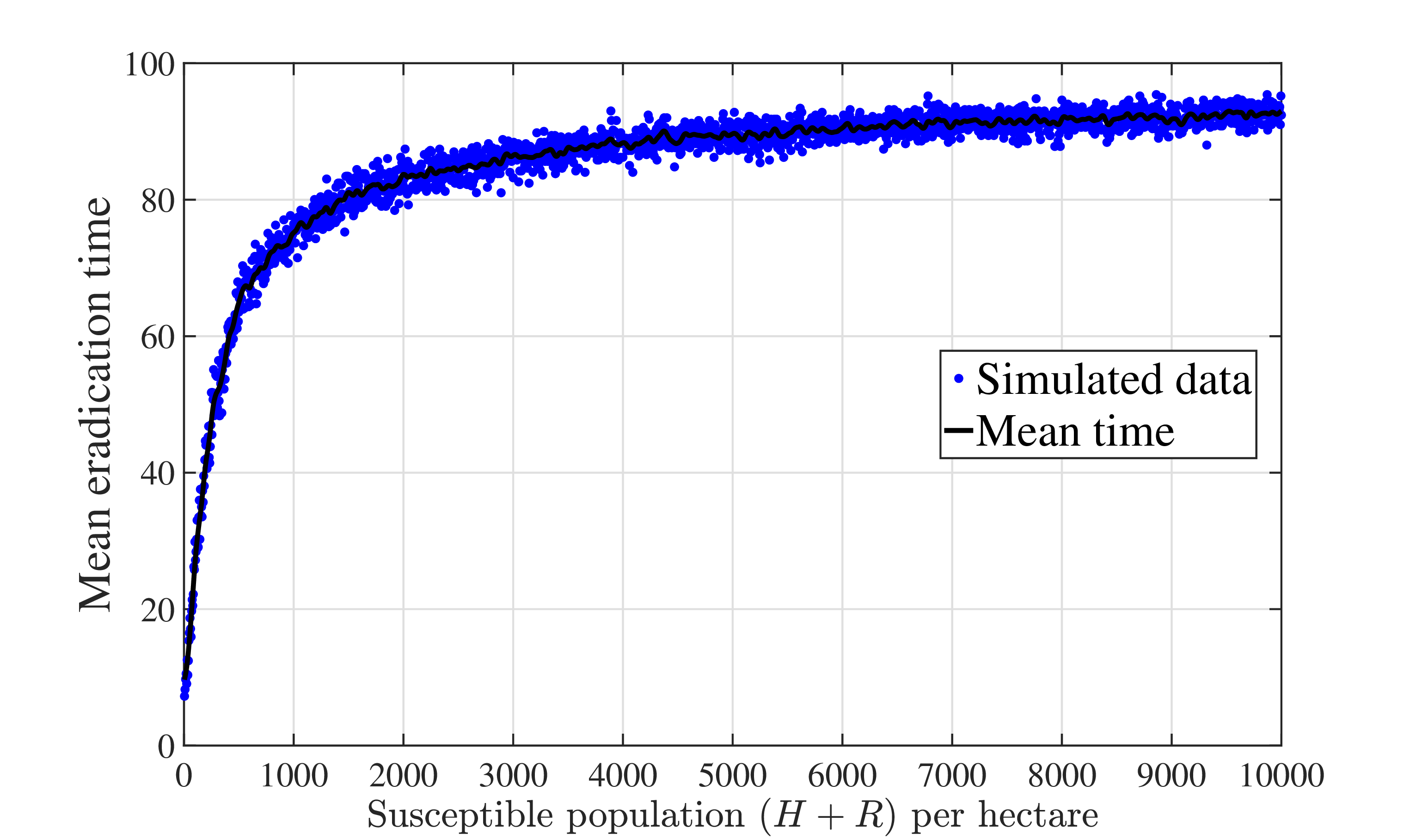}
    \caption{Mean time to disease eradication in the model of Black Sigatoka disease as a function of the total initial susceptible and resistant leaf biomass ($H(0) + R(0)$, in leaves per hectare), estimated from 500 stochastic simulations for each initial biomass level. As the initial leaf biomass increases, the mean eradication time lengthens and approaches the maximum simulation time of 100 days.}
    \label{fig16}
\end{figure}

\clearpage
\subsection{Sobol sensitivity analysis of variability of \texorpdfstring{$I^*$}{}}
To quantify how parameter uncertainty influences the endemic-infected leaf biomass $I^*$, we perform a global sensitivity analysis using Sobol's variance-based method \cite{saltelli, sobol2, sobol}. This approach decomposes the variance of the model output into contributions from individual parameters and their interactions, without assuming monotonic relationships. Parameters are sampled over a prescribed range using Sobol sequences, and $I^*$ is computed for each sample to estimate $\mathrm{Var}(I^*)$. The first-order Sobol index $(F_i)$ measures the direct contribution of a parameter to output variability,  while the total-order index ($T_i$) accounts for both direct effects and all interaction effects.

Figure \ref{sobol} shows that the host recruitment parameters $\kappa, \Lambda$, along with $(\rho)$, sanitation, and cultural practices, are the primary drivers of variability in $I^*$. In particular, $\kappa$ dominates, contributing more than $60\%$ of the variance independently and nearly $70\%$ when interactions are included. Most other parameters exhibit negligible first-order effects, indicating limited individual influence on $I^*$. Figure \ref{sobol_interact} presents the interaction contributions $T_i-F_i$, showing that the variability of $I^*$ is influenced by distributed nonlinear interactions rather than a single interaction pathway. Although $\kappa$, $\Lambda$, and $\rho$ remain important, interaction effects are shared across multiple parameters, indicating that the variability of $I^*$ is driven by host recruitment and sanitation processes, with additional contributions from widespread nonlinear interactions. These results suggest that interventions targeting these key parameters are likely to produce robust reductions in endemic infection levels.
\begin{figure}[!htbp]
    \begin{subfigure}{0.55\textwidth}
	\centering
	\includegraphics[width=1.0\linewidth]{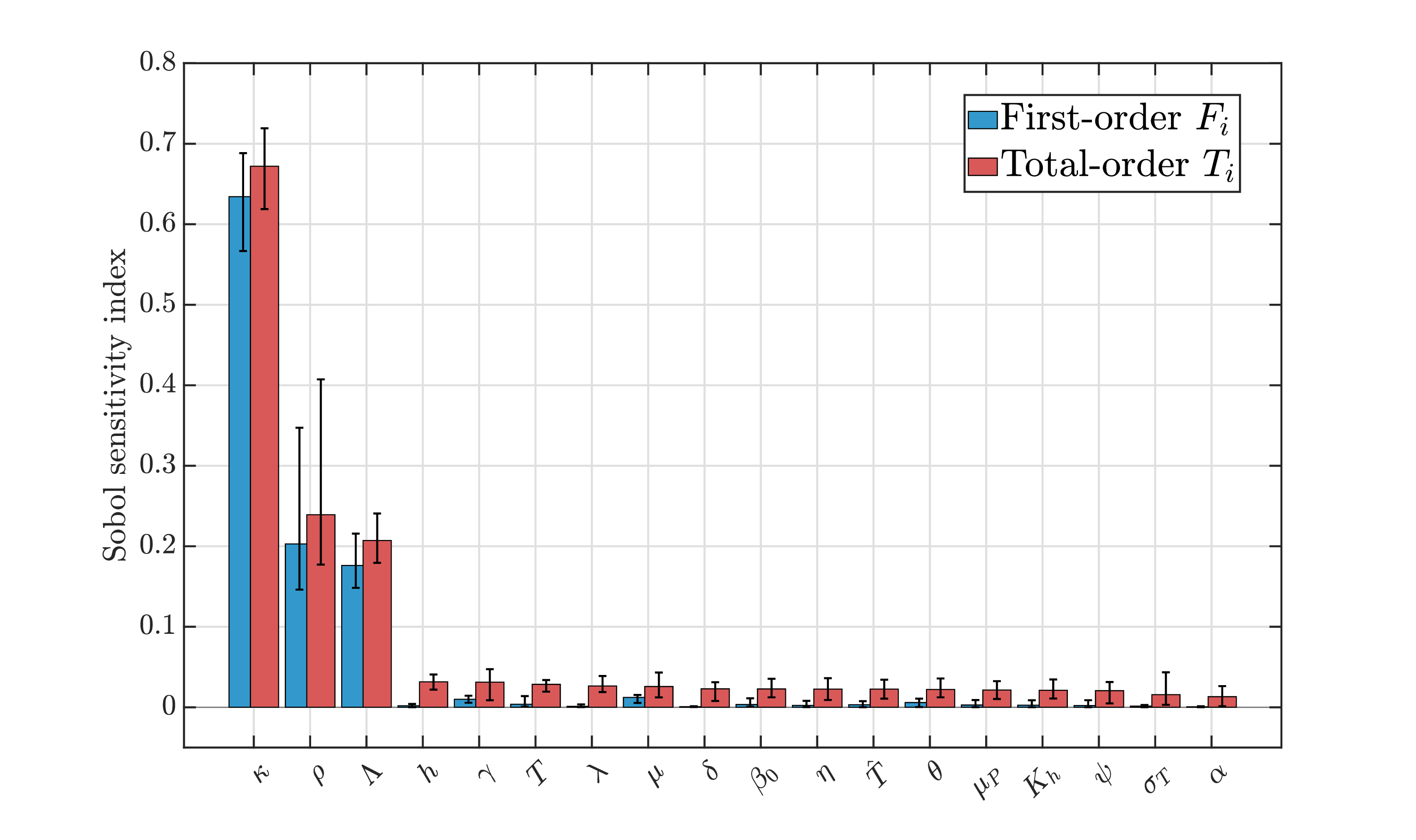}
        \caption{}
        \label{sobol}
    \end{subfigure}
    \begin{subfigure}{0.55\textwidth}
	\centering
	\includegraphics[width=1.0\linewidth]{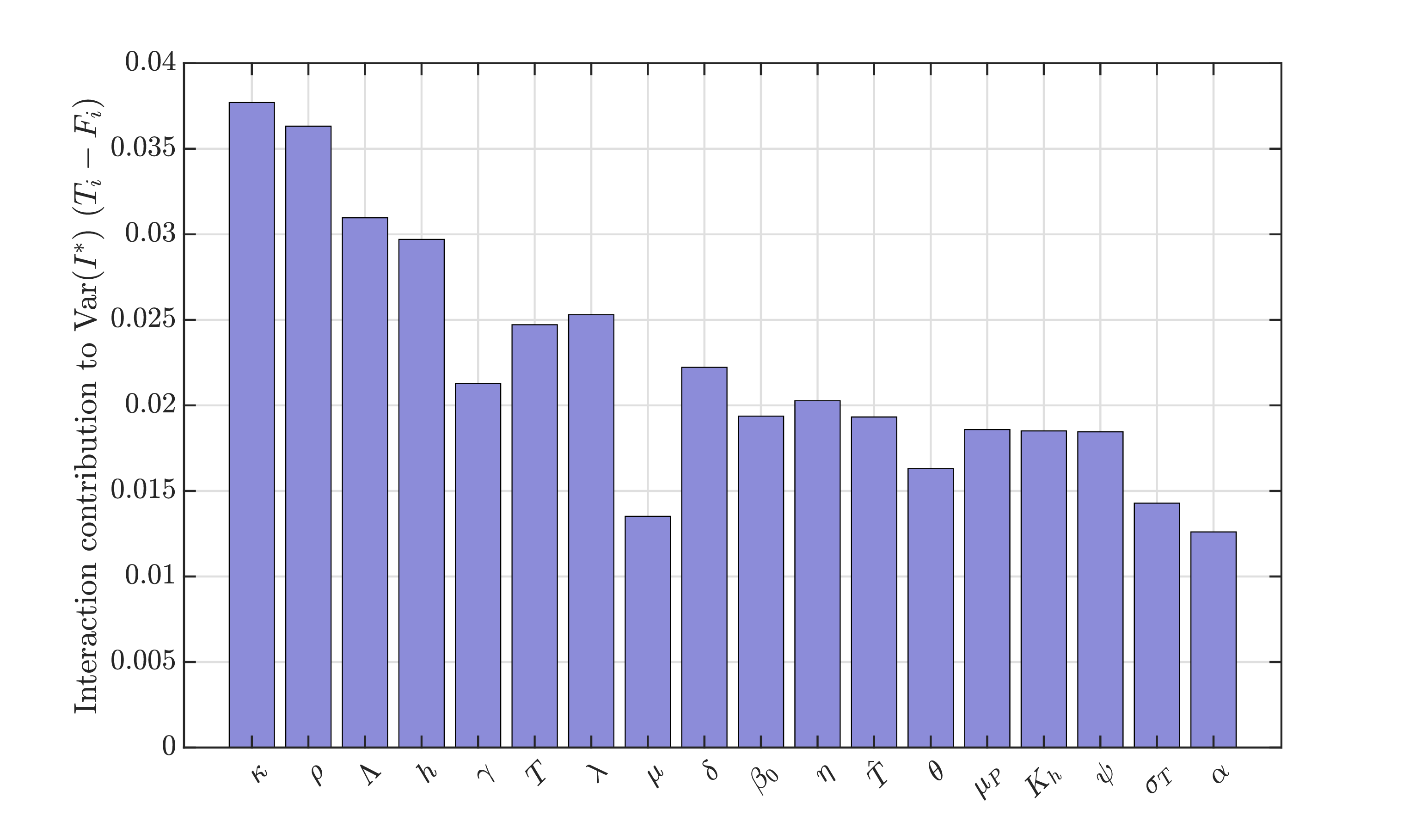}
        \caption{}
        \label{sobol_interact}
    \end{subfigure}
    \caption{Variance-based global sensitivity analysis of the endemic-infected leaf biomass $I^*$ using Sobol indices. \textbf{(a)} shows the first-order $(S_i)$ and total-order $(S_{T_i})$ sensitivity indices, quantifying the direct and total contributions of each parameter to $\mathrm{Var}(I^*)$, respectively. \textbf{(b)} shows the interaction contribution $S_{T_i} - S_i$, highlighting the extent of nonlinear interactions among parameters. Generated using the parameter values stated in Table \ref{table2}.}
    \label{fig4}
\end{figure}

\clearpage
\section{Conclusion}\label{sec:conclusion}
In this study, a deterministic compartmental model incorporating dual transmission pathways (via ascospores and conidia) and mate limitation in sexual reproduction was developed to describe the dynamics of Black Sigatoka disease (BSD) caused by \textit{Pseudocercospora fijiensis}. 
The model exhibits a backward bifurcation, by which a stable endemic equilibrium coexists with the disease-free equilibrium for a range of parameter values where the basic reproduction number satisfies $\mathcal{R}_0<1$. This bistability provides a theoretical explanation for the observed resilience of BSD. Management strategies that focus on reducing $\mathcal{R}_0$ below 1 may prove insufficient in the backward bifurcation regime, as the system can remain trapped in the endemic state despite a subcritical transmission potential.

Sensitivity analysis of the endemic equilibrium, conducted using normalized forward sensitivity indices, Latin Hypercube Sampling, and Partial Rank Correlation Coefficients, identified the most influential parameters. These results emphasize that effective long-term suppression requires interventions extending beyond $\mathcal{R}_0$ reduction. Priority should be assigned to (1) cultural practices that limit the production of new susceptible leaves during periods of elevated transmission risk, e.g., through optimized planting density, timing, or defoliation strategies, and (2) the development and widespread deployment of genetically resistant or tolerant banana varieties, which directly reduce host susceptibility and therefore weaken both transmission pathways. From a management perspective, these results in Figures \ref{nfsi}, 
\ref{fig3}, and \ref{fig4} indicate that interventions aimed at reducing the fraction of newly susceptible leaves $(\kappa)$, lowering host recruitment $(\Lambda)$, and strengthening sanitation/cultural practices $(\rho)$ are likely to produce the largest reductions in endemic infected biomass, especially during periods of elevated humidity and near-optimal temperature.

To account for intrinsic stochastic fluctuations inherent in finite populations and discrete infection events, a stochastic model was formulated using the Stochastic Simulation Algorithm (SSA). Extensive numerical realizations demonstrated that stochastic trajectories generally align with deterministic predictions under moderate to high inoculum pressure. However, at low population sizes, stochastic effects introduce substantial variability, including elevated eradication probabilities even when the deterministic model predicts persistence. Sobol variance-based analysis further demonstrates that variability in endemic-infected leaf biomass is primarily driven by a small set of parameters, with additional contributions from nonlinear interactions. These findings highlight the importance of considering demographic stochasticity when evaluating eradication feasibility, particularly in small or recently sanitized plantations.

Overall, our analyses suggest that sustainable management of Black Sigatoka disease requires an integrated strategy that includes cultural practices and genetic resistance. The backward bifurcation phenomenon, due to a nonlinear mate-limitation term, explains the limited success of transmissibility-focused strategies observed in production regions. Future modeling efforts should address the limitations in our model by incorporating spatial heterogeneity, explicit fungicide-resistance dynamics, or climate-change projections to further refine control recommendations. Prioritizing resistant cultivars and strategically reducing susceptible host tissue offer the most promising path toward reducing reliance on intensive fungicide applications while safeguarding banana and plantain production.

\clearpage
\bibliographystyle{plain}
\bibliography{REFS}

@article{bebber,
  title={Climate change effects on Black Sigatoka disease of banana},
  author={Bebber, Daniel P},
  journal={Philosophical Transactions of the Royal Society B},
  volume={374},
  number={1775},
  pages={20180269},
  year={2019},
  publisher={The Royal Society}
}

@article{agouanet,
  title={Control Model of Banana Black Sigatoka Disease with Seasonality},
  author={Agouanet, Franklin Platini and Yatat-Djeumen, Valaire and Tankam-Chedjou, Isra{\"e}l and Tewa, Jean Jules},
  journal={Differential Equations and Dynamical Systems},
  pages={1--40},
  year={2024},
  publisher={Springer}
}

@article{ravigne,
  title={Mate limitation in fungal plant parasites can lead to cyclic epidemics in perennial host populations},
  author={Ravign{\'e}, Virginie and Lemesle, Val{\'e}rie and Walter, Alicia and Mailleret, Ludovic and Hamelin, Fr{\'e}d{\'e}ric M},
  journal={Bulletin of Mathematical Biology},
  volume={79},
  pages={430--447},
  year={2017},
  publisher={Springer}
}

@article{diekmann,
  title={On the definition and the computation of the basic reproduction ratio R 0 in models for infectious diseases in heterogeneous populations},
  author={Diekmann, Odo and Heesterbeek, Johan Andre Peter and Metz, Johan Anton Jacob},
  journal={Journal of Mathematical Biology},
  volume={28},
  pages={365--382},
  year={1990},
  publisher={Springer}
}

@article{van,
  title={Reproduction numbers and sub-threshold endemic equilibria for compartmental models of disease transmission},
  author={Van den Driessche, Pauline and Watmough, James},
  journal={Mathematical biosciences},
  volume={180},
  number={1-2},
  pages={29--48},
  year={2002},
  publisher={Elsevier}
}

@article{chitnis,
  title={Determining important parameters in the spread of malaria through the sensitivity analysis of a mathematical model},
  author={Chitnis, Nakul and Hyman, James M and Cushing, Jim M},
  journal={Bulletin of Mathematical Biology},
  volume={70},
  pages={1272--1296},
  year={2008},
  publisher={Springer}
}

@article{bari,
  title = {An exploration of modeling approaches for capturing seasonal transmission in stochastic epidemic models},
  journal = {Mathematical Biosciences and Engineering},
  volume = {22},
  number = {2},
  pages = {324-354},
  year = {2025},
  issn = {1551-0018},
  doi = {10.3934/mbe.2025013},
  url = {https://www.aimspress.com/article/doi/10.3934/mbe.2025013},
  author = {Mahmudul Bari Hridoy},
}

@article{ramirez,
  title={SIR-SI model with a Gaussian transmission rate: Understanding the dynamics of dengue outbreaks in Lima, Peru},
  author={Ram{\'\i}rez-Soto, Max Carlos and Machuca, Juan Vicente Bogado and Stalder, Diego H and Champin, Denisse and M{\'a}rtinez-Fern{\'a}ndez, Maria G and Schaerer, Christian E},
  journal={Plos one},
  volume={18},
  number={4},
  pages={e0284263},
  year={2023},
  publisher={Public Library of Science, San Francisco, CA, USA}
}

@article{gillespie,
  title={Exact stochastic simulation of coupled chemical reactions},
  author={Gillespie, Daniel T},
  journal={The journal of physical chemistry},
  volume={81},
  number={25},
  pages={2340--2361},
  year={1977},
  publisher={ACS Publications}
}

@article{afful,
  title={Deterministic optimal control compartmental model for COVID-19 infection},
  author={Afful, Bernard Asamoah and Safo, Godfred Agyemang and Marri, Daniel and Okyere, Eric and Ohemeng, Mordecai Opoku and Kessie, Justice Amenyo},
  journal={Modeling Earth Systems and Environment},
  volume={11},
  number={2},
  pages={87},
  year={2025},
  publisher={Springer}
}

@article{henderson,
  title={Black Sigatoka disease: new technologies to strengthen eradication strategies in Australia},
  author={Henderson, Juliane and Pattemore, JA and Porchun, SC and Hayden, HL and Van Brunschot, S and Grice, Kathy RE and Peterson, RA and Thomas-Hall, SR and Aitken, EAB},
  journal={Australasian Plant Pathology},
  volume={35},
  number={2},
  pages={181--193},
  year={2006},
  publisher={Springer}
}

@article{sosnowski,
  title={Techniques for the treatment, removal and disposal of host material during programmes for plant pathogen eradication},
  author={Sosnowski, MR and Fletcher, JD and Daly, AM and Rodoni, BC and Viljanen-Rollinson, SLH},
  journal={Plant Pathology},
  volume={58},
  number={4},
  pages={621--635},
  year={2009},
  publisher={Wiley Online Library}
}

@article{peterson,
  title={Eradication of black leaf streak disease from banana growing areas in Australia},
  author={Peterson, R and Grice, K and Goebel, R},
  journal={InfoMusa},
  volume={14},
  number={2},
  pages={2},
  year={2005}
}

@article{cook,
  title={Predicted economic impact of black Sigatoka on the Australian banana industry},
  author={Cook, David C and Liu, Shuang and Edwards, Jacqueline and Villalta, Oscar N and Aurambout, Jean-Philippe and Kriticos, Darren J and Drenth, Andre and De Barro, Paul J},
  journal={Crop Protection},
  volume={51},
  pages={48--56},
  year={2013},
  publisher={Elsevier}
}

@article{castillo,
  title={Dynamical models of tuberculosis and their applications},
  author={Castillo-Chavez, Carlos and Song, Baojun},
  journal={Mathematical biosciences and engineering},
  volume={1},
  number={2},
  pages={361},
  year={2004},
  publisher={American Institute of Mathematical Sciences}
}

@article{sobol,
  title={On sensitivity estimation for nonlinear mathematical models},
  author={Sobol', Il'ya Meerovich},
  journal={Matematicheskoe modelirovanie},
  volume={2},
  number={1},
  pages={112--118},
  year={1990},
  publisher={Russian Academy of Sciences, Branch of Mathematical Sciences}
}

@article{sobol2,
  title={Global sensitivity indices for nonlinear mathematical models and their Monte Carlo estimates},
  author={Sobol, Ilya M},
  journal={Mathematics and computers in simulation},
  volume={55},
  number={1-3},
  pages={271--280},
  year={2001},
  publisher={Elsevier}
}

@book{saltelli,
  title={Sensitivity analysis in practice: a guide to assessing scientific models},
  author={Saltelli, Andrea and Tarantola, Stefano and Campolongo, Francesca and Ratto, Marco and others},
  volume={1},
  year={2004},
  publisher={Wiley Online Library}
}

\end{document}